\def\P{\ensuremath{\mathrm{P}}}
\def\NP{\ensuremath{\mathrm{NP}}}
\def\EE{\ensuremath{\mathrm{EE}}}
\def\NE{\ensuremath{\mathrm{NE}}}
\def\NEE{\ensuremath{\mathrm{NEE}}}
\def\FP{\ensuremath{\mathrm{FP}}}
\def\UP{\ensuremath{\mathrm{UP}}}
\def\DisjNP{\ensuremath{\mathrm{DisjNP}}}
\def\DisjCoNP{\ensuremath{\mathrm{DisjCoNP}}}
\def\coNP{\ensuremath{\mathrm{coNP}}}
\def\coNE{\ensuremath{\mathrm{coNE}}}
\def\coNEE{\ensuremath{\mathrm{coNEE}}}
\def\NPcoNP{\ensuremath{\mathrm{NP}\cap\mathrm{coNP}}}
\def\TALLY{\ensuremath{\mathrm{TALLY}}}
\def\NPMV{\ensuremath{\mathrm{NPMV}}}
\def\NPSV{\ensuremath{\mathrm{NPSV}}}
\def\NPbV{\ensuremath{\mathrm{NPbV}}}
\def\NPkV{\ensuremath{\mathrm{NP}k\mathrm{V}}}
\def\TAUT{\ensuremath{\mathrm{TAUT}}}
\def\SAT{\ensuremath{\mathrm{SAT}}}
\def\QBF{\ensuremath{\mathrm{QBF}}}
\def\PF{\ensuremath{\mathrm{PF}}}
\def\TFNP{\ensuremath{\mathrm{TFNP}}}
\def\PSPACE{\ensuremath{\mathrm{PSPACE}}}
\def\rsa{\ensuremath{\mathsf{RSA}}}
\def\B{\ensuremath{\Phi}}
\def\pls{\ensuremath{\mathrm{PLS}}}
\def\ppa{\ensuremath{\mathrm{PPA}}}
\def\ppp{\ensuremath{\mathrm{PPP}}}
\def\ppad{\ensuremath{\mathrm{PPAD}}}
\def\ppads{\ensuremath{\mathrm{PPADS}}}
\def\hUP{\ensuremath{\mathsf{UP}}}
\def\hDisjNP{\ensuremath{\mathsf{DisjNP}}}
\def\hDisjCoNP{\ensuremath{\mathsf{DisjCoNP}}}
\def\hNPcoNP{\ensuremath{\mathsf{NP}{}\cap{}\mathsf{coNP}}}
\def\hCON{\ensuremath{\mathsf{CON}}}
\def\hCONN{\ensuremath{\mathsf{CON}^{\mathsf{N}}}}
\def\hSAT{\ensuremath{\mathsf{SAT}}}
\def\hTFNP{\ensuremath{\mathsf{TFNP}}}
\def\hPneqNP{\ensuremath{\mathsf{P}\neq \mathsf{NP}}}
\def\hNPneqcoNP{\ensuremath{\mathsf{NP}\neq \mathsf{coNP}}}
\def\NPcoNP{\ensuremath{\mathsf{NP} \cap \mathsf{coNP}}}
\def\N{\ensuremath{\mathrm{\mathbb{N}}}}
\def\leqmpp{\ensuremath{\leq_\mathrm{m}^\mathrm{pp}}}
\def\leqmp{\ensuremath{\leq_\mathrm{m}^\mathrm{p}}}
\def\sim{\ensuremath{\leq_\mathrm{s}}}
\def\psim{\ensuremath{\leq_\mathrm{s}^\mathrm{p}}}
\def\leqlex{\ensuremath{\leq_\text{lex}}}
\def\lelex{\ensuremath{<_\text{lex}}}
\DeclareMathOperator{\dom}{dom}
\DeclareMathOperator{\ran}{ran}
\def\sqsubsetneq{\mathrel{\sqsubseteq\kern-0.92em\raise-0.15em\hbox{\rotatebox{313}{\scalebox{1.1}[0.75]{\(\shortmid\)}}}\scalebox{0.3}[1]{\ }}}
\def\sqsupsetneq{\mathrel{\sqsupseteq\kern-0.92em\raise-0.15em\hbox{\rotatebox{313}{\scalebox{1.1}[0.75]{\(\shortmid\)}}}\scalebox{0.3}[1]{\ }}}
\newcommand{\card}[1]{|#1|}
\DeclareMathOperator{\enc}{\mathrm{enc}}
\newcommand{\xTrueThenElse}[2]{\if\x1{#1}\else{#2}\fi}
\let\origthelstnumber\thelstnumber 
\newcommand*\Suppressnumber{%
  \lst@AddToHook{OnNewLine}{%
    \let\thelstnumber\relax%
     \advance\c@lstnumber-\@ne\relax%
    }%
}
\newcommand*\Reactivatenumber{%
  \lst@AddToHook{OnNewLine}{%
   \let\thelstnumber\origthelstnumber%
   \advance\c@lstnumber\@ne\relax}%
}
\def\namedlabel#1#2{\begingroup
   \def\@currentlabel{#2}%
   \label{#1}\endgroup
}
\def\thetitle{An Oracle with no $\mathrm{UP}$-Complete Sets, but $\mathrm{NP}=\mathrm{PSPACE}$}
\title{\thetitle}
\titlerunning{\thetitle}
\author{David Dingel}{Julius-Maximilians-Universität Würzburg, Germany}{david.dingel@uni-wuerzburg.de}{https://orcid.org/0009-0009-6684-682X}{}
\author{Fabian Egidy}{Julius-Maximilians-Universität Würzburg, Germany}{fabian.egidy@uni-wuerzburg.de}{https://orcid.org/0000-0001-8370-9717}{supported by the German Academic Scholarship Foundation}
\author{Christian Glaßer}{Julius-Maximilians-Universität Würzburg, Germany}{christian.glasser@uni-wuerzburg.de}{}{}
\authorrunning{D. Dingel and F. Egidy and C. Glaßer}
\keywords{Computational Complexity, Promise Classes, Complete Sets, Oracle Construction}
\begin{document}
\maketitle
\begin{abstract}
We construct an oracle relative to which $\NP = \PSPACE$,
but $\UP$ has no many-one complete sets.
This combines the properties of an oracle by Hartmanis and Hemachandra \cite{hh88}
and one by Ogiwara and Hemachandra \cite{oh93}.

The oracle provides new separations of classical conjectures
on optimal proof systems and complete sets in promise classes.
This answers several questions by Pudlák \cite{pud17}, e.g.,
the implications $\hUP \Longrightarrow \hCONN$ and
$\hSAT \Longrightarrow \hTFNP$ are false relative to our oracle.

Moreover, the oracle demonstrates that, in principle,
it is possible that $\TFNP$-complete problems exist,
while at the same time $\SAT$ has no p-optimal proof systems.
\end{abstract}

\section{Introduction}

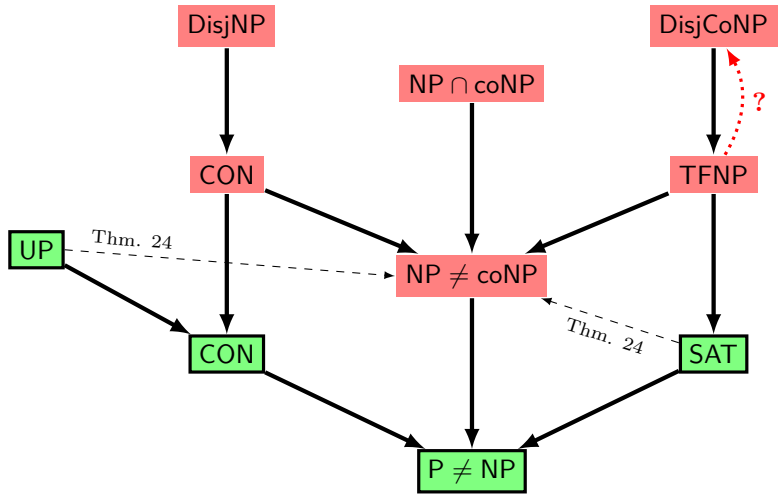
\begin{figure}[tb]
    \centering
    \begin{tikzpicture}[node/.style={anchor=base},tips=proper,
        EXPTRUE/.style={rectangle,draw=black,very thick,fill=green!50},
        EXPFALSE/.style={rectangle,very thick,fill=red!50},
        IMPTRUE/.style={rectangle,draw=black,very thick,fill=green!50},
        IMPFALSE/.style={rectangle,very thick,fill=red!50},
        ]
        \node [align=center,IMPTRUE] (pneqnp) at (0,0) {$\hPneqNP$};
        \node [above=2.0cm of pneqnp,EXPFALSE] (npeconp) {$\hNPneqcoNP$};
        \node [above=2.0cm of npeconp,IMPFALSE] (npnconp) {$\NPcoNP$};
        \node [above left=1.0cm and 2.0cm of pneqnp,IMPTRUE] (con) {$\hCON$};
        \node [above right=1.0cm and 2.0cm of pneqnp,IMPTRUE] (sat) {$\hSAT$};
        \node [above=4.0cm of con,anchor=base,IMPFALSE] (disjnp) {$\mathsf{DisjNP}$};
        \node [above=4.0cm of sat,anchor=base,IMPFALSE] (disjconp) {$\mathsf{DisjCoNP}$};
        \node [above=2.0cm of sat,anchor=base,IMPFALSE] (tfnp) {$\hTFNP$};
        \node [above=2.0cm of con,anchor=base,IMPFALSE] (conn) {$\hCON$};

        \node [above left=1.0cm and 2.0cm of con,anchor=base,EXPTRUE] (up) {$\mathsf{UP}$};
        \node [above right=1.0cm and 2.0cm of sat,anchor=base] (centering-dummy) {};

        \draw [-latex,ultra thick] (con) -- (pneqnp);
        \draw [-latex,ultra thick] (sat) -- (pneqnp);
        \draw [-latex,ultra thick] (npeconp) -- (pneqnp);
        \draw [-latex,ultra thick] (npnconp) -- (npeconp);
        \draw [-latex,ultra thick] (disjnp) -- (conn);
        \draw [-latex,ultra thick] (conn) -- (con);
        \draw [-latex,ultra thick] (up) -- (con);
        \draw [-latex,ultra thick] (disjconp) -- (tfnp);
        \draw [-latex,ultra thick] (tfnp)-- (sat);
        \draw [-latex,ultra thick] (tfnp)-- (npeconp);
        \draw [-latex,ultra thick] (conn)-- (npeconp);

        \draw [dotted, -latex, very thick, bend right, red ] (tfnp) edge  node [pos=0.5,right] {\bf{?}} (disjconp);

        \draw [dashed, -latex                      ] (up.east) edge  node [sloped,pos=0.2,above] {\scriptsize Thm. \ref{thm:results}} (npeconp.west);
        \draw [dashed, -latex                      ] (sat) edge  node [sloped,pos=0.5,below] {\scriptsize Thm. \ref{thm:results}} (npeconp);
    \end{tikzpicture}
    \caption{%
        Hypotheses that are true relative to our oracle are filled green and have borders, whereas those that are false relative to the oracle are red without borders.
        Solid arrows mean relativizable implications.
        A dashed arrow from one conjecture $\mathsf A$ to another conjecture $\mathsf B$ means that there is an oracle $X$ against the implication $\mathsf A\Rightarrow\mathsf B$, meaning that $\mathsf A \land \neg\mathsf B$ holds relative to $X$.
        For clarity, we only depict the two strongest separations proved in this paper, and omit those that either follow from these, as well as those already known before.
        All possible separations have now been achieved, except for the one depicted as the red dotted arrow.
    }
    \label{fig:diagram}
\end{figure}

We investigate relationships between the following classical conjectures
on $\NP$, optimal proof systems, and complete sets in promise classes.
All these conjectures have a long history and remain open.
They are deeply rooted in formal logic,
but also have far reaching practical implications.

\begin{center}
\begin{tabular}{>{\normalsize}r>{\normalsize}l}
    \hPneqNP: & \mbox{$\P$ does not equal $\NP$ \cite{coo71,lev73}}\\
    \hNPneqcoNP: & \mbox{$\NP$ does not equal $\coNP$ \cite{edm66}}\\[1ex]
    \hCON: & \mbox{p-optimal proof systems for $\TAUT$ do not exist \ \cite{kp89}}\\
    \hCONN\!: & \mbox{optimal proof systems for $\TAUT$ do not exist \ \cite{kp89}}\\ 
    \hSAT: & \mbox{p-optimal proof systems for $\SAT$ do not exist \ \cite{kp89}}\\[1ex]
    \hTFNP: & \mbox{$\TFNP$ does not contain many-one complete problems \ \cite{mp91}}\\
    \hNPcoNP: & \mbox{$\NP \cap \coNP$ does not contain many-one complete problems \ \nocite{sip82}\cite{kan79}}\\
    \hUP: & \mbox{$\UP$ does not contain many-one complete problems \ \cite{hh88}}\\
    \hDisjNP: & \mbox{$\DisjNP$ does not contain many-one complete pairs \ \cite{raz94}}\\
    \hDisjCoNP: & \mbox{$\DisjCoNP$ does not contain many-one complete pairs \ \cite{mes00,pud14}} \\
\end{tabular}
\end{center}
\noindent
We refer to Pudlák \cite{pud13} for a comprehensive elaboration of the conjectures and their context.

\subparagraph{Pudláks program.}
The known implications between these conjectures are shown in \autoref{fig:diagram}.
They raise the question of whether further implications
can be recognized with the currently available methods.
Pudlák \cite{pud17} therefore initiated a research program
to either prove further implications or to disprove them relative to an oracle.
The latter shows that the implication is beyond the reach of current methods.
Hence, from todays perspective,
the corresponding conjectures are not mere reformulations of one another,
but instead deserve individual attention.

So far such oracles have been constructed by
Verbitskii~\cite{ver91}, Glaßer et al.~\cite{gssz04}, Khaniki~\cite{kha22},
Dose~\cite{dos20c,dos20a,dos20b}, Dose and Glaßer~\cite{dg20},
and Egidy, Ehrmanntraut and Glaßer~\cite{eeg22}.

\subparagraph{Optimal proof systems.}
The hypotheses $\hCON$, $\hCONN$, and $\hSAT$ are concered
with the existence of (p-)optimal proof systems.
The study of proof systems was initiated by Cook and Reckhow \cite{cr79} and
motivated by the $\NP \overset{?}{=} \coNP$ question,
because the set of tautologies $\TAUT$ has a proof system with polynomially bounded proofs
if and only if $\NP = \coNP$.
Krajíček and Pudlák \cite{kp89,kra95,pud98} link questions about proof systems
to questions in bounded arithmetic. 
Especially the notions of optimal and p-optimal proof systems
have gained much attention in research.
They are closely connected to the separation of fundamental complexity classes, e.g.,
Köbler, Messner, and Torán \cite{kmt03} show that the absence of optimal proof systems for $\TAUT$
implies $\NEE \neq \coNEE$.
Moreover, (p-)optimal proof systems have tight connections to promise classes, e.g.,
if $\TAUT$ has p-optimal proof systems, then $\UP$ has many-one complete sets \cite{kmt03}.
Many more relationships are known between proof systems and promise classes
\cite{bey04, bey06, bey07, bey10, bkm09, bs11, gsz07, gsz09, kmt03, pud17, raz94}.

\subparagraph{TFNP-complete problems.}
The class $\TFNP$ contains a multitude of problems that are of central importance
to various practical applications, 
and as such is currently being researched with great intensity.
The search for a complete problem in $\TFNP$ is of primary concern to certain fields
such as cryptography, but it remains open whether such a problem exists.

For instance, the security of \rsa\ \cite{rsa} is primarily based
on the hardness of integer factorization,
which is a $\TFNP$ problem.
\rsa\ is now widely considered to no longer represent the state of the art in modern cryptography.
Hence researchers develop cryptoschemes whose security is based on the hardness of other problems.
However, it is not clear whether these are strictly harder than factorization.
A $\TFNP$-complete problem would allow cryptoschemes
that are at least as hard to break as any $\TFNP$ problem, i.e.,
schemes with optimal security guarantees.

Since the search for $\TFNP$-complete problems has been unsuccessful,
various subclasses were defined and studied in order to construct
at least a complete problem for those subclasses. 
Their definitions are based on the proof of totality used to show the membership to $\TFNP$.
Prominent examples are
$\pls$ \cite{jpy88}, based on the argument that \emph{every dag has a sink},
$\ppa$ \cite{pap91}, based on the fact that \emph{every graph has an even number of nodes with odd degree},
$\ppad$ \cite{pap91} and $\ppads$, based on slight variations of $\ppa$ for directed graphs,
and 
\ppp \cite{pap94}, based on the polynomial pigeonhole principle%
\footnote{%
    For precise definitions we refer to recent papers \cite{gp17}, \cite[notably Figure 1]{goos22}.%
}\!. 
All of them have complete problems \cite{jpy88,pap94} and can be defined in a syntactical way too \cite{pap94}.
Beame et al.\ \cite{bceip95} constructed oracles relative to which these classes
are no subset of $\P$, and no inclusions exist beyond the few that are already known. 
The classes are significant to various practical applications like
polynomially hard one way functions \cite{bpr15,gar16},
collision resistant hash functions \cite{gp17},
computing square roots modulo $n$ \cite{jer16}, and
the security of prominent homomorphic encryption schemes \cite{jkkz21}.

\subsection*{Our Contribution}

\subparagraph{1. Oracle with $\NP = \PSPACE$, but $\UP$ has no many-one complete sets}\hfill \\
Hartmanis and Hemachandra \cite{hh88} construct an oracle relative to which
$\UP$ does not have many-one complete sets.
Ogiwara and Hemachandra \cite{oh93} construct an oracle relative to which $\UP \neq \NP = \PSPACE$.
Our oracle provides both properties at the same time,
but this is achieved using completely different methods.
Due to the far-reaching collapse $\NP = \PSPACE$ and
the close connection between $\UP$-completeness and optimality of proof systems,
we obtain a number of useful properties summarized in Corollary \ref{cor:results}.

\subparagraph{2. Consequences for Pudlaks program}\hfill \\
Regarding the conjectures shown in \autoref{fig:diagram},
all known implications are presented in the figure,
while for some of the remaining implications
there exist oracles relative to which the implications do not hold.
From the implications that were left open, our oracle refutes all but one.
The key to this observation was to take the conjecture $\NP \neq \coNP$ into consideration.
As we make sure that our oracle satisfies $\NP = \coNP$,
the conjectures $\hCONN$ and $\hTFNP$ are false, 
while the conjectures $\hCON$ and $\hSAT$ are equivalent.
In addition, the oracle satisfies the conjecture $\hUP$,
which implies $\hCON$, and thereby in summary refutes all of the following previously open implications:
\begin{alignat*}{3}
     \hUP  &\Longrightarrow \hCONN     \qquad    \hUP  &&\Longrightarrow {\NP \neq \coNP}  \qquad  \hUP  &&\Longrightarrow \hDisjNP \\
     \hCON    &\Longrightarrow \hCONN     \qquad    \hCON    &&\Longrightarrow {\NP \neq \coNP}  \qquad  \hSAT &&\Longrightarrow \hDisjCoNP \\
     \hSAT &\Longrightarrow \hTFNP  \qquad    \hSAT &&\Longrightarrow {\NP \neq \coNP}  \qquad  \qquad  && 
\end{alignat*}
Therefore, all implications that are not presented in \autoref{fig:diagram}
fail relative to some oracle, except for one single implication,
which we leave as open question:
\[\hTFNP \mathop{\Longrightarrow}\limits^{?} \hDisjCoNP\]
We suspect that the construction of an oracle with $\hTFNP$ and $\neg \hDisjCoNP$
is a particular challenge.

\subparagraph{3. Consequences for TFNP-complete problems}\hfill \\
Regarding the search for $\TFNP$-complete problems,
our oracle demonstrates that, in principle, it is possible that
$\NP=\coNP$ and hence $\TFNP$-complete problems exist,
while at the same time $\SAT$ has no p-optimal proof systems.
To understand this situation, let us compare the following properties.
\begin{enumerate}
    \item $\NP=\coNP$\\
    In this case there exists a polynomial-time algorithm $A$
    such that for every $\NP$-machine $N$ the follwing holds:
    $N' := A(N)$ is a total $\NP$-machine and for all $x \in L(N)$,
    $N(x)$ and $N'(x)$ have the same set of accepting paths.
    This is a kind of parsimonious modification
    of faulty machines to globally enforce the promise.

    \item $\SAT$ has a p-optimal proof system\\
    In this case there exists
    an algorithm $B$ such that the following holds:
    For every total $\NP$-machine there exists a polynomial $p$ such that for all $x$,
    $B(N,x)$ finds within $p(|x|)$ steps a proof for the statement ``$N(x)$ accepts''.
    Thus each total $\NP$-machine $N$ admits polynomial-time constructive proofs for
    ``$N(x)$ accepts'' (i.e., the promise is locally kept).
\end{enumerate}
Relative to our oracle, the first property holds, but not the second one.
Hence algorithm $A$ exists, which for every total $\NP$-machine $N$
provides a total $\NP$-machine $N' := A(N)$ such that $N(x)$ and $N'(x)$
have the same set of accepting paths.
However, since the second property does not hold,
it is not clear how to obtain polynomial-time constructive proofs for
the statement ``$N'(x)$ accepts'' (because in general, we do not know the function that reduces $N'(x)$ to the $\TFNP$-complete problem).
In short, although $N'$ is total, we might not have a constructive proof for this.

\section{Preliminaries}\label{sec:prelim}

\subsection{Basic Definitions and Notations}
This section covers the notation of this paper and introduces well-known complexity theoretic notions.
\subparagraph{Words and sets.} 
All sets and machines in this paper are defined over the alphabet $\Sigma = \{ 0,1 \}$. 
$\Sigma^\ast$ denotes the set of all strings of finite length, and for a string $w \in \Sigma^*$ let $|w|$ denote the length of $w$. 
We write $\N$ for the set of non-negative integers, $\N^+$ for the set of positive integers, and $\emptyset$ for the empty set.
We write $\N[X]$ for the set of polynomials with natural coefficients.
The powerset of some set $A$ is denoted by $\mathcal{P}(A)$,
and its cardinality by $\card{A}$.
For a set $A$ and $k \in \N$ we write $A^{=k} \coloneqq \{x \in A \mid |x| =k\}$ as the subset of $A$ containing only words of length $k$.
We define this analogously for $\leq$.
For a clearer notation, we use the abbreviations $\Sigma^k \coloneqq {\Sigma^*}^{=k}$ and $\Sigma^{\leq k} \coloneqq {\Sigma^*}^{\leq k}$.
Let $<_{\text{lex}}$ denote the quasi-lexicographic (i.e., `shortlex') ordering of words over $\Sigma^*$, uniquely defined by requiring $0 <_{\text{lex}} 1$ and $v <_{\text{lex}} w$ if $|v| < |w|$. 
The operators $\cup$, $\cap$, and $\setminus$ respectively denote the union, intersection and set-difference.

\subparagraph{Functions.} 
Under the definition of $<_{\text{lex}}$ there is a unique order-isomorphism between $(\Sigma^*, <_{\text{lex}})$ and $(\N, <)$, which induces a polynomial-time computable, polynomial-time invertible bijection between $\Sigma^*$ and $\N$ denoted by $\enc$ (resp., $\enc^{-1}$ for the inverse).
Note that this definition is a variant of the dyadic representation.
The domain and range of a function $f$ are denoted by $\dom(f)$ and $\ran(f)$. 
For a function $f$ and $A \subseteq \dom(f)$, we define $f(A) \coloneqq \{f(x) \mid x \in A\}$. 
We define certain polynomial functions $p_i \colon \N \to \N$ for $i \in \N$ by $p_i(n) \coloneqq n^i + i$.

Let $\langle \cdot \rangle \colon \bigcup _{i \geq 0} (\Sigma^\ast)^i \to \Sigma^\ast$ be an injective, polynomial time computable and polynomial time invertible pairing function\footnote{Notice that the pairing function explicitly is not surjective,
    so invertibility is meant in the sense that $\ran(\langle \cdot \rangle) \in \P$, and there is a function $f \in \FP$ such that $\langle f(y)_1, \ldots, f(y)_n \rangle = y$ for all $y \in \ran(\langle \cdot \rangle)$.} such that 
$|\langle u_1, \dots , u_n \rangle | = 2(|u_1| + \cdots + |u_n| + n)+1$.
The pairing function has the following property.
\begin{claim}\label{claim:listencoding}
Let $w,x \in \Sigma^*$.
If $|w| \geq |x|+3$, then $|w| \geq |\langle x,w \rangle|/4$.
\end{claim}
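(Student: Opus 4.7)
The proof is a direct arithmetic verification, so the plan is mostly bookkeeping rather than ideas. First I would apply the explicit length formula for the pairing function stated just above the claim: plugging in $n = 2$, $u_1 = x$, $u_2 = w$ yields
\[
|\langle x, w\rangle| \;=\; 2(|x| + |w| + 2) + 1 \;=\; 2|x| + 2|w| + 5.
\]
Thus the target inequality $|w| \geq |\langle x,w\rangle|/4$ unfolds to $4|w| \geq 2|x| + 2|w| + 5$, i.e., $2|w| \geq 2|x| + 5$.

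Next I would derive this from the hypothesis $|w| \geq |x| + 3$. Multiplying the hypothesis by $2$ gives $2|w| \geq 2|x| + 6 \geq 2|x| + 5$, which is exactly what we need. So the chain of implications is clean and no case analysis is required.

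There is no real obstacle here; the only subtlety worth noting is that the seemingly tighter condition $|w| \geq |x| + 5/2$ suffices over the reals, but since $|w|$ and $|x|$ are integers the paper rounds this up to the convenient integer bound $|w| \geq |x| + 3$, which is why the $+3$ appears in the hypothesis rather than a $+2$ or a fractional constant. The whole argument is just one display line plus a sentence, so I would keep the proof to a couple of lines.
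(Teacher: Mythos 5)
Your proof is correct and follows essentially the same arithmetic argument as the paper: expand the length formula to $|\langle x,w\rangle| = 2|x| + 2|w| + 5$ and bound $2|x|+5$ by $2|w|$ using the hypothesis. The paper just writes it as a single inequality chain rather than unfolding the target and working backward.
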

\begin{proof}
The following calculation proves the claimed statement:
\[|\langle x,w \rangle| = 2(|x| + |w| + 2) + 1 = 2|x| + 5 + 2|w| \leq 2|w| + 2|w| = 4 |w|\qedhere\]
\end{proof}

\subparagraph{Machines.}
We use the default model of a Turing machine in the deterministic as well as in the non-deterministic variant, abbreviated by DTM, resp., NTM.
The language decided by a Turing machine $M$ is denoted by $L(M)$.
We use Turing transducers to compute functions.
For a Turing transducer $F$ we write $F(x)=y$ when on input $x$ the transducer outputs $y$.
We sometimes refer to the function computed by $F$ as `{the function $F$}'.

\subparagraph{Complexity Classes.}
The classes $\P$, $\NP$, $\coNP$, $\FP$ and $\PSPACE$ denote the standard complexity classes.
Furthermore, we are interested in several promise classes.
Originally defined by Valiant \cite{val76}, $\UP$ denotes the set of languages that can be decided by some NTM that has at most one accepting path on every input.
A pair $(A,B)$ is a disjoint $\NP$-pair, if $A,B \in \NP$ and $A \cap B = \emptyset$. Selman \cite{sel88} and Grollmann and Selman \cite{gs88} defined the class $\DisjNP$ consisting of all disjoint $\NP$-pairs.
Similarly, Fenner et al.\ \cite{ffnr96,ffnr03} defined disjoint $\coNP$-pairs and the class $\DisjCoNP$. Megiddo and Papadimitrou \cite{mp91} define the class $\TFNP$ as
\[\TFNP \coloneqq \{(R,p) \in (\mathcal{P}(\Sigma^* \times \Sigma^*) \cap \P) \times \N[x] \mid \forall y \in \Sigma^*~\exists x \in \Sigma^{\leq p(|y|)}\colon (x,y) \in R\}\]

\subparagraph{Reductions and Complete Problems.}
The common polynomial-time many-one reducibility for sets $A,B \subseteq \Sigma^*$ is denoted by $\leqmp$, i.e., $A \leqmp B$ if there exists an $f \in \FP$ such that $x \in A \Leftrightarrow f(x) \in B$.
For disjoint pairs, we say that a pair $(A,B)$ is polynomial-time many-one reducible to $(C,D)$, denoted by $(A,B) \leqmpp (C,D)$, if there is a function $f \in \FP$ such that $f(A) \subseteq C$ and $f(B) \subseteq D$ \cite{raz94}.
For $\TFNP$-problems, we use the reducibility described by Johnson, Papadimitrou, and Yannakakis \cite{jpy88}.
Namely, for problems $(T,q), (R,p) \in \TFNP$, $(T,q)$ is polynomial-time many-one reducible to $(R,p)$, denoted by $(T,q) \leq_{\TFNP} (R,p)$, if there exist $f,g \in \FP$ such that
\[\forall y \in \Sigma^* \colon \forall x \in \Sigma^{\leq p(|y|)} \colon (x,f(y)) \in R \Longrightarrow (g(x,y),y) \in T\]
If $\leq$ is some notion of reducibility for some class $\mathcal{C}$, then we call a set $A$ $\leq$-complete for $\mathcal{C}$, if $A \in \mathcal{C}$ and for all $B \in \mathcal{C}$ the reduction $B \leq A$ holds.
The set of satisfiable boolean formulas will be denoted by $\SAT$, and the set of boolean tautologies will be denoted by $\TAUT$. They are $\leqmp$ complete for $\NP$ (resp. $\coNP$). 

\subparagraph{Proof systems.} We use proof systems for sets as defined by Cook and Reckhow \cite{cr79}, i.e., a function $f \in \FP$ is a proof system for $\ran(f)$.
Analogously to the notion of reduction for complexity classes we define the terms (p-)simulation for proof systems:
a proof system $f$ (p-)simulates a proof system $g$, denoted by $g \sim f$ (resp., $g \psim f$), if there exists a total function $\pi$ (resp., $\pi \in \FP$) and a polynomial $p$ such that $|\pi(x)| \leq p(|x|)$ and $f(\pi(x))=g(x)$ for all $x \in \Sigma^*$.
We call a proof system $f$ (p-)optimal for $\ran(f)$ if $g \sim f$ (resp., $g \psim f$) for all $g \in \FP$ with $\ran(g)=\ran(f)$.

\subparagraph{Oracle specific definitions and notations.} An oracle $B$ is a subset of $\Sigma^*$.
We relativize the concept of Turing machines and Turing transducers by giving them access to a write-only oracle tape as proposed by Simon \cite{sim77}\footnote{When considering time-bounded computation, this machine model reflects the usual relativization of Turing machines. For space-bounded computations, the oracle tape is also subject to the space-bound.}.
Furthermore, we relativize complexity classes, proof systems, reducibilities and 
(p-)simulation
by defining them over machines with oracle access, i.e., whenever a Turing machine or Turing transducer is part of a definition, we replace them by an oracle Turing machine or an oracle Turing transducer.
We indicate the access to some oracle $B$ in the superscript of the mentioned concepts, i.e., $\P^B$, $\NP^B$, $\FP^B$, $\dots$ for complexity classes, $M^B$ for a Turing machine or Turing transducer $M$, $\leq_{\mathrm{m}}^{p,B}$, $\leq_\mathrm{m}^{\mathrm{pp},B}$, $\leq_{\TFNP}^B$ for the reducibilities and $\leq_\mathrm{s} ^B$ and $\leq_\mathrm{s} ^{\mathrm{p},B}$ for (p-)simulation.
We sometimes omit the oracles in the superscripts, e.g., when sketching ideas in order to convey intuition, but never in the actual proof.

Let $\{F_i\}_{i \in \N}$ be a standard enumeration of polynomial time oracle Turing transducers, such that relative to any oracle $B$, $F_i^B$ has running time exactly $p_i$ and for any function $f \in \FP^B$, there is some $i$ such that $F_i^B$ computes $f$.
Since the functions computed by $\{F_i^B\}_{i \in \N}$ form exactly $\FP^B$, we call such machines $\FP^B$-machines.
Similarly, let $\{N_i\}_{i \in \N}$ be a standard enumeration of polynomial time non-deterministic oracle Turing machines, such that relative to any oracle $B$, $N_i^B$ has running time exactly $p_i$ on each path and for any set $L \in \NP^B$, there is some $i$ such that $L(N_i^B) = L$.
Since the sets decided by $\{N_i^B\}_{i \in \N}$ form exactly $\NP^B$, we call such machines $\NP^B$-machines.
If $p$ is an encoding of a computation path,
then $Q(p)$ denotes the set of oracle queries on $p$.

\subparagraph*{Relativized QBF.} 
For any oracle $B$, we define the relativized problem $\QBF^B$ as the typical $\QBF$ problem as defined by Shamir \cite{sha90},
but the boolean formulas are extended by expressions $B(w)$ for $w \in \Sigma^*$, which evaluate to true if and only if $w \in B$.
It holds that for any oracle $B$, $\QBF^B$ is $\leq_\mathrm{m}^{\mathrm{p},B}$-complete for $\PSPACE^B$.

Throughout the remaining sections, $M$ will be a polynomial-space oracle Turing machine which, given access to oracle $B$, accepts $\PSPACE^B$.
Choose $k \geq 3$ such that $M$ on input $x \in \Sigma^\ast$ completes using at most $t(x) \coloneqq |x|^k + k$ space (including words written to the oracle tape).

\subsection{Notions for the Oracle Construction}
In this section we define all necessary objects and properties for the oracle construction and convey their intuition.
\subparagraph{Tools for \textsf{UP}.}
In order to achieve that $\UP^B$ has no complete set, 
we will ensure that for any set $L(N_i^B) \in \UP^B$ a set $W_i^B \in \UP^B$ exists such that $W_i^B$ does not reduce to $L(N_i^B)$, i.e., $W_i^B \not \leq_\mathrm{m}^{\mathrm{p},B} L(N_i^B)$.
Here, $W_i^B$ is a witness that $L(N_i^B)$ is not complete for $\UP^B$.
For this, we injectively assign countably infinite many levels of the oracle to each set $W_i$, where a level constitutes certain words of same length as follows.
\begin{definition}[$H_i$]\hfill\\
    Let $e(0) \coloneqq 2,
    ~e(i+1) \coloneqq 2^{e(i)}$ 
    and $H_i \coloneqq \{e(2^i \cdot 3^j) \mid j \in \N\}$ for $i \in \N$.
\end{definition}
\begin{corollary}[Properties of $H_i$]\hfill
\begin{enumerate}[(i)]
\item For every $i \in \N$, the set $H_i$ is countably infinite and a subset of the even numbers.\label{cor:Hi-i}
\item $H_i \in \P$ for all $i \in \N$.\label{cor:Hi-ii}
\item For $i,j \in \N$ with $i \neq j$, it holds that $H_i \cap H_j = \emptyset$.\label{cor:Hi-iii}
\end{enumerate}
\end{corollary}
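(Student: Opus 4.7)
The plan is to handle the three items in order, with everything resting on the fact that $e$ is strictly increasing (hence injective) together with unique prime factorization.

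First I would verify by induction that $e(n) \geq 2$ for all $n \in \N$, so that $e(n+1) = 2^{e(n)} > e(n)$; thus $e$ is strictly increasing and in particular injective. The same observation shows that every value of $e$ is a power of $2$ with positive exponent, and therefore even. This already gives item (i): the map $j \mapsto e(2^i \cdot 3^j)$ is injective, so $H_i$ is infinite (countability is automatic), and every element is even.

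Item (iii) then follows in one line from what has been established: if $e(2^i \cdot 3^a) = e(2^j \cdot 3^b)$ then injectivity of $e$ gives $2^i \cdot 3^a = 2^j \cdot 3^b$, and comparing the exponent of the prime $2$ in the unique prime factorization of both sides forces $i = j$.

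For item (ii), on input $n$ in binary the algorithm I have in mind would iteratively take base-$2$ logarithms: set $k_0 \coloneqq n$ and, as long as $k_\ell > 2$, check whether $k_\ell$ is a power of $2$ and, if so, replace it by $k_{\ell+1} \coloneqq \log_2 k_\ell$. If at some point $k_\ell$ is not a power of $2$, reject; if the process terminates with $k_m = 2$, then $e(m) = n$ and we accept precisely when $m$ is of the form $2^i \cdot 3^b$ for some $b \in \N$. The only non-trivial point is that this runs in time polynomial in the input length, which follows because the bit-length of $k_\ell$ shrinks by a logarithm at every step (so the loop terminates after $O(\log^\ast n)$ iterations) and the index $m$ found at the end is so small that the final check $m \stackrel{?}{=} 2^i \cdot 3^b$ is trivial. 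I expect item (ii) to be the only one needing any real verification; (i) and (iii) are immediate from injectivity of $e$ and unique prime factorization.
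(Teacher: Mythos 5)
Your proof is correct, and in fact the paper gives no proof at all for this corollary (it is stated as an unproved observation directly after the definition of $e$ and $H_i$), so there is no authorial argument to compare against; your write-up supplies exactly the elementary argument the authors leave implicit. The three key ingredients you identify — $e$ is strictly increasing (hence injective) because $e(n)\geq 2$ implies $e(n+1)=2^{e(n)}>e(n)$, every value $e(n)$ with $n\geq 1$ is a positive power of $2$ and hence even, and unique prime factorization separates exponents $2^i3^a$ from $2^j3^b$ when $i\neq j$ — are precisely what items (i) and (iii) need, and your iterated-logarithm decoder for item (ii) is sound: the loop runs for at most about $\log^* n$ iterations with rapidly shrinking operand sizes, so the total work is linear in the bit-length of $n$, and the recovered index $m$ is so small (at most $\log^* n$) that testing $m=2^i3^b$ for the fixed parameter $i$ is trivial. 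The only point worth making explicit, which you handle implicitly, is the base case $n=e(0)=2$ where the recovered index is $m=0$, which is correctly rejected since $0$ is not of the form $2^i3^b$.
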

\begin{definition}[Witness language $W_i^B$]\hfill\\
\label{definition_h_x}
    For $i \in \N$ and an oracle $B$, we define the set
    \[W_i^B \coloneqq \{ 0^n \mid n \in H_i \mbox{ and there exists } x \in \Sigma^n \mbox{ with } x \in B\}\]
\end{definition}
\begin{corollary}[Sufficient condition for $W_i^B \in \UP^B$]\hfill\\
    \label{cor:hg_eigenschaften}
    For any oracle $B$ and any $i \in \N$, the following implication holds:
    $$\card{B^{=n}} \leq 1~\mbox{for all } n \in H_i \Longrightarrow W_i^B \in \UP^B$$
\end{corollary}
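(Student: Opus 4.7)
The plan is to exhibit a non-deterministic polynomial-time oracle Turing machine $N$ with oracle $B$ that decides $W_i^B$ and has at most one accepting path on every input, thereby witnessing $W_i^B \in \UP^B$.

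First I would have $N$ on input $y$ perform the following deterministic preprocessing: check that $y$ is of the form $0^n$ (i.e.\ contains only zeros), set $n \coloneqq |y|$, and verify that $n \in H_i$. Both checks are polynomial-time in $|y|$; the second uses Corollary (\ref{cor:Hi-ii}), which says $H_i \in \P$. If either check fails, the machine rejects on every path. Otherwise, $N$ non-deterministically guesses a string $x \in \Sigma^n$ (this takes exactly $n = |y|$ non-deterministic steps), queries the oracle for $x$, and accepts on that path if and only if $x \in B$.

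Correctness follows directly from the definition of $W_i^B$: the machine has an accepting path on $y$ precisely when $y = 0^n$, $n \in H_i$, and there exists some $x \in \Sigma^n$ with $x \in B$, which is exactly the condition $y \in W_i^B$. The running time is polynomial in $|y|$ since the preprocessing is polynomial, the guess has length $n = |y|$, and the oracle query has length $n$.

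It remains to verify the UP promise, and this is where the hypothesis $|B^{=n}| \leq 1$ for all $n \in H_i$ is used. If $y$ fails the preprocessing, $N$ has zero accepting paths. If $y = 0^n$ with $n \in H_i$, then a path labelled by $x \in \Sigma^n$ accepts if and only if $x \in B^{=n}$; since by assumption $|B^{=n}| \leq 1$, there is at most one such $x$, hence at most one accepting path. Thus $N$ is a $\UP^B$ machine deciding $W_i^B$, proving the corollary. I do not foresee a genuine obstacle here: the statement is essentially a direct unpacking of the definitions of $W_i^B$ and $\UP^B$ together with the cardinality hypothesis and the fact that $H_i \in \P$.
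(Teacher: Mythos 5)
Your proof is correct and follows essentially the same route as the paper's: both construct the obvious $\NP^B$-machine that rejects inputs not of the form $0^n$ with $n\in H_i$ (using $H_i\in\P$), otherwise non-deterministically guesses $x\in\Sigma^n$ and accepts iff $x\in B$, and both observe that $\card{B^{=n}}\le 1$ bounds the number of accepting paths by one. The only difference is that you have spelled out the running-time and correctness checks more explicitly than the paper does.
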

\begin{proof}
    Let $B$ and $i$ be arbitrary.
Since $H_i \in \P$, an $\NP$-machine can reject on all inputs except $0^n$ with $n \in H_i$.
On inputs $0^n$ the machine can non-deterministically query all words $x \in \Sigma^n$ and accept if $x \in B$.
Now, if $\card{B^{=n}} \leq 1$, then this machine has at most one accepting path on all inputs.
Hence, $W_i^B \in \UP^B$. 
\end{proof}

We can control the membership of $W_i$ to $\UP$ in the oracle construction.
We will never add more than one word on the levels $H_i$, except for when we can rule out $N_i$ as an $\UP$-machine.

\subparagraph{Tools for $\boldsymbol{\NP = \PSPACE}$.}%
In order to achieve that $\NP^\B = \PSPACE^\B$, we will encode $\QBF^\B$ into the oracle $\B$, such that $\QBF^\B \in \NP^\B$, from which $\NP^\B = \PSPACE^\B$ follows. 

\begin{definition}[Coding set $Z^B$]\hfill\\
    For any oracle $B$, we define the set
    \[Z^B \coloneqq  \{ x \in \Sigma^\ast \mid \exists w \in \Sigma^{t(x)} \colon \langle x, w \rangle \in B \}\]
\end{definition}

\begin{corollary}\label{cor:z_in_np}
    For any oracle $B$, it holds that $Z^B \in \NP^B$.
\end{corollary}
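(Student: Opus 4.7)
The plan is to describe a straightforward nondeterministic polynomial-time algorithm with oracle access to $B$ that decides $Z^B$. On input $x \in \Sigma^\ast$, the machine first computes $t(x) = |x|^k + k$, which is polynomial in $|x|$ since $k$ is a fixed constant. It then nondeterministically guesses a string $w \in \Sigma^{t(x)}$, which takes exactly $t(x)$ nondeterministic steps, computes the pairing $\langle x, w \rangle$ in polynomial time (using that $\langle\cdot\rangle$ is polynomial-time computable and that $|\langle x, w \rangle| = 2(|x|+|w|+2)+1$ is polynomial in $|x|$), queries the oracle, and accepts iff $\langle x, w \rangle \in B$.

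The correctness is immediate from the definition of $Z^B$: the machine has an accepting path on $x$ if and only if there exists some $w \in \Sigma^{t(x)}$ such that $\langle x, w \rangle \in B$, which is exactly the defining condition for $x \in Z^B$. The running time bound follows from the fact that $t(x)$, $|\langle x, w \rangle|$, and the computation of the pairing are all polynomial in $|x|$, so the overall running time is polynomial on every computation path.

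There is no real obstacle here; the only minor point to verify is that guessing and querying a string of length polynomial in $|x|$ fits into the standard $\NP^B$ framework, which it does by the definition of $\NP^B$-machines stated earlier in the preliminaries. Hence $Z^B \in \NP^B$ for every oracle $B$.
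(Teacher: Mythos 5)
Your proof is correct and is exactly the argument the statement calls for; the paper itself leaves the corollary unproved, treating it as immediate from the definition of $Z^B$, and you have simply spelled out the obvious $\NP^B$-machine (guess $w \in \Sigma^{t(x)}$, compute $\langle x, w\rangle$, query $B$) with the right running-time bookkeeping.
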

We will assemble the oracle $\B$ in such a way that $Z^\B = \QBF^\B$, 
i.e., $Z^\B$ will be $\PSPACE^\B$-hard, and thus $\NP^\B = \PSPACE^\B$.

\subparagraph{Goals of the construction.}
\begin{definition}[Desired properties of $\B$]\label{def:desiredproperties}\hfill\\
The later constructed oracle $\B$ should satisfy the following properties:
\begin{enumerate}[\textbf{P}1]
\item $\forall x \in \Sigma^\ast \colon ( x \in \QBF^\B \Longleftrightarrow x \in Z^\B )$. \namedlabel{np_conp_sufficient}{P1}
\smallskip
\\
(Meaning: $\QBF^\B \in \NP^\B$.)
\item $\forall i \in \N$, at least one of the following statements holds: \namedlabel{satcon_sufficient}{P2}
    \begin{enumerate}[(I)]
        \item $\exists x \in \Sigma^\ast \colon N_i^\B(x)$ accepts on more than one path.
        \smallskip
        \\
         (Meaning: $N_i^\B$ is not a $\UP^\B$-machine.) \namedlabel{f_i_kaputt}{P2.I}
        \item Both of the following statements hold:
        \begin{enumerate}[(a)]
        \item $\forall n \in H_i \colon$ $\card{\B^{=n}} \leq 1$\namedlabel{gi_nicht_simuliert-i}{P2.II.a}
        \smallskip
        \\
        (Meaning: $W_i^\B \in \UP^\B$.)
        \item $\forall j \in \N ~ \exists x \in \Sigma^\ast \colon N_i^\B(F_j^\B(x))$ accepts if and only if $x \notin W_i^\B$.\namedlabel{gi_nicht_simuliert-ii}{P2.II.b}
		\smallskip
		\\
		(Meaning: $W_i^\B$ does not reduce to $L(N_i^\B)$ via $F_j^\B$.)
        \end{enumerate} 
 \namedlabel{gi_nicht_simuliert}{P2.II}
    \end{enumerate}
\end{enumerate}
\end{definition}
The following lemma shows that these properties imply our desired structural properties of complexity classes.
\begin{lemma}\label{lemma:properties}
Let $\B \subseteq \Sigma^\ast$.
\begin{enumerate}[(i)]
\item If statement \emph{\ref{np_conp_sufficient}} is satisfied relative to $\B$, then $\NP^\B = \PSPACE^\B$.\namedlabel{lemma:properties-i}{(i)}
\item If statement \emph{\ref{satcon_sufficient}} is satisfied relative to $\B$, then there is no $\leq_\mathrm{m}^{\mathrm{p},\B}$-complete set for $\UP^\B$.\namedlabel{lemma:properties-ii}{(ii)}
\end{enumerate}
\end{lemma}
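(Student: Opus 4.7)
The plan is to show each part essentially by unwinding the definitions together with Corollary \ref{cor:z_in_np} (for part (i)) and Corollary \ref{cor:hg_eigenschaften} (for part (ii)). Both proofs are short once the setup has been established.

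For part \ref{lemma:properties-i}, I would simply observe that $Z^\B\in\NP^\B$ by Corollary \ref{cor:z_in_np}, and property \ref{np_conp_sufficient} states exactly that $\QBF^\B = Z^\B$ as sets. Hence $\QBF^\B\in\NP^\B$. Since $\QBF^\B$ is $\leq_\mathrm{m}^{\mathrm{p},\B}$-complete for $\PSPACE^\B$ (as noted in the relativized-QBF paragraph), this yields $\PSPACE^\B\subseteq\NP^\B$, and the reverse inclusion is trivial.

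For part \ref{lemma:properties-ii}, I would argue by contradiction. Suppose $L$ is a $\leq_\mathrm{m}^{\mathrm{p},\B}$-complete set for $\UP^\B$. Since $L\in\UP^\B$, there is a $\UP^\B$-machine deciding $L$; by the standard-enumeration property of $\{N_i\}_{i\in\N}$, we can fix an index $i$ such that $L(N_i^\B)=L$ and $N_i^\B$ has at most one accepting path on every input. For this $i$, statement \ref{f_i_kaputt} fails by choice, so property \ref{satcon_sufficient} forces \ref{gi_nicht_simuliert} to hold. Clause \ref{gi_nicht_simuliert-i} together with Corollary \ref{cor:hg_eigenschaften} then yields $W_i^\B\in\UP^\B$.

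Now the completeness of $L=L(N_i^\B)$ for $\UP^\B$ provides a reduction $f\in\FP^\B$ with $W_i^\B\leqmp L$ relative to $\B$. By the standard enumeration of $\FP^\B$-machines, pick $j$ with $F_j^\B=f$. Then for every $x\in\Sigma^\ast$ we have $x\in W_i^\B\iff F_j^\B(x)\in L(N_i^\B)\iff N_i^\B(F_j^\B(x))$ accepts, which directly contradicts clause \ref{gi_nicht_simuliert-ii}. The only subtlety I see is making sure the chosen index $i$ genuinely satisfies both $L(N_i^\B)=L$ and the one-accepting-path condition simultaneously, but this is immediate from the statement that $\{N_i\}_{i\in\N}$ enumerates all polynomial-time NTMs and hence in particular contains the UP-machine witnessing $L\in\UP^\B$. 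No part of the argument looks difficult; part \ref{lemma:properties-ii} is really just the standard ``diagonalization against each candidate reduction $F_j$ for each candidate machine $N_i$'' pattern, packaged via properties \ref{f_i_kaputt} and \ref{gi_nicht_simuliert}.
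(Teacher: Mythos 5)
Your proposal is correct and follows essentially the same route as the paper's proof: part (i) by combining Corollary~\ref{cor:z_in_np} with the $\PSPACE^\B$-completeness of $\QBF^\B$, and part (ii) by assuming a complete set $L(N_i^\B)$, deducing that \ref{f_i_kaputt} fails so \ref{gi_nicht_simuliert} holds, invoking Corollary~\ref{cor:hg_eigenschaften} for $W_i^\B\in\UP^\B$, and contradicting \ref{gi_nicht_simuliert-ii} via the assumed reduction $F_j^\B$. The subtlety you flag about choosing $i$ so that $N_i^\B$ is simultaneously a $\UP$-machine and decides $L$ is the right thing to note, and you resolve it correctly.
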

\begin{proof}
To \ref{lemma:properties-i}: Recall that $\QBF^{\B}$ is $\leq_\mathrm{m}^{\mathrm{p},\B}$-complete for $\PSPACE^\B$.
Since property \ref{np_conp_sufficient} holds relative to $\B$, $Z^\B = \QBF^\B$.
From \autoref{cor:z_in_np} we get $Z^\B \in \NP^\B$ and thus $\NP^\B = \PSPACE^\B$.

To \ref{lemma:properties-ii}: Assume there is a $\leq_\mathrm{m}^{\mathrm{p},\B}$-complete set $C$ for $\UP^\B$.
Let $N_i^\B$ be an $\UP^\B$-machine such that $L(N_i^\B) = C$.
Then property \ref{f_i_kaputt} cannot hold for $i$, because $N_i$ would not be an $\UP$-machine.
Hence, property \ref{gi_nicht_simuliert} holds for $i$.
By \ref{gi_nicht_simuliert-i} and \autoref{cor:hg_eigenschaften}, we get $W_i^\B \in \UP^\B$.
Since $L(N_i^\B)$ is complete, there is some $j \in \N$ such that for all $x \in \Sigma^*$ we have
\[x \in W_i^\B \Longleftrightarrow F_j^\B(x) \in L(N_i^\B).\]
This is a contradiction to the satisfaction of property \ref{gi_nicht_simuliert-ii}, because there has to be some $x$ where this equivalence does not hold.
Hence, the assumption has to be false and there is no $\leq_\mathrm{m}^{\mathrm{p},\B}$-complete set for $\UP^\B$.
\end{proof}
\subparagraph{Stages.}
We choose certain stages which are concerned with dealing with property \ref{gi_nicht_simuliert-ii} by a function $m$.
We will call the values of $m$ `{$\hUP$-stages}', since only on these we will contribute to enforcing property \ref{gi_nicht_simuliert-ii} and consequently $\hUP^\B$.
The stages need to be sufficiently distant from another.
Among other things in order for them to not influence each other, but also because of other technical caveats that arise later.
We will refer back to this section whenever the requirements are used.
For now, it suffices to know that the stages will be chosen `{sufficiently large}'.

We choose a function $m : \N \times \N \to \N$, $(i,j) \mapsto m(i,j)$ which meets the requirements:
\begin{enumerate}[M1]
    \item $m(i,j) \in H_i$.\namedlabel{satstufen_form}{M1}
    \smallskip
    \\
    (Meaning: $\hUP$-stages $m(i, \cdot)$ are important to the witness-language $W_i$.) 
    \item $2^{m(i,j)/4} > (p_i(p_j(m(i,j))))^2 + p_i(p_j(m(i,j))) + 1$.\namedlabel{expo_suchraum}{M2}
    \smallskip
    \\
    (Meaning: The number of words of length $m(i,j)$ is far bigger than the number of words the computation $N_i(F_j(0^{m(i,j)}))$ can query.)
    \item $m(i, j) > m(i', j')~ \Longrightarrow ~m(i,j) > p_{i'}(p_{j'}(m(i',j')))$. \namedlabel{m_monoton}{M3}
    \smallskip
    \\
    (Meaning: The stages are sufficiently distant from another such that for no $i',j' \in \N$, the computation $N_{i'}(F_{j'}(0^{m(i',j')}))$ can query any word of the $\hUP$-stages following $m(i',j')$.)
\end{enumerate}

The function $m$ does not have to be computable, it only has to be total.
Such a candidate certainly exists, since $\card{H_i} = \infty$ for any $i$, and thus $m(i,j)$ can simply be chosen large enough that the requirements \ref{expo_suchraum} and \ref{m_monoton} are met. 
Notice that since $H_i$ only contains even numbers for all $i \in \N$, so does $m$, due to \ref{satstufen_form}.
This leads to the following observation.
\begin{observation}\label{obs:even-odd}
For all $k \in \ran(m)$:~$\ran(\langle \cdot \rangle) \cap \Sigma^k = \emptyset$, because $\langle \cdot \rangle$ only maps to odd lengths.
    In particular, if $|x| = m(i,j)$ for some $i,j \in \N$, then $x \notin \ran(\langle \cdot \rangle)$. 
\end{observation}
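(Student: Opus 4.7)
The observation follows by chaining together three facts already established in the excerpt, so the proof plan is essentially a bookkeeping argument with no real obstacle.

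First, I would fix $k \in \ran(m)$ and pick $i,j \in \N$ with $k = m(i,j)$. By property \ref{satstufen_form} of the stage function, $k = m(i,j) \in H_i$. By Corollary (\ref{cor:Hi-i}) on the properties of $H_i$, every element of $H_i$ is even, so $k$ is even.

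Next, I would invoke the length formula for the pairing function, namely $|\langle u_1, \dots, u_n \rangle| = 2(|u_1| + \cdots + |u_n| + n) + 1$, which is odd for every tuple in the domain. Hence every word in $\ran(\langle \cdot \rangle)$ has odd length, and so no such word can have length $k$. This gives $\ran(\langle \cdot \rangle) \cap \Sigma^k = \emptyset$, which is exactly the first claim.

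For the ``in particular'' clause, I would simply specialize: if $|x| = m(i,j)$ for some $i,j \in \N$, then $|x| \in \ran(m)$, and by what was just shown $x \notin \ran(\langle \cdot \rangle)$. The only step that needs any care is the appeal to the length formula of $\langle \cdot \rangle$, and even that is immediate from the definition stated in the Preliminaries; no step in this proof is an obstacle.
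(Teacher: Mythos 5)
Your proof is correct and follows exactly the reasoning the paper itself gives: the remark immediately preceding the observation notes that $m$ takes only even values because $m(i,j) \in H_i$ (by \ref{satstufen_form}) and $H_i$ contains only even numbers, while the observation's own parenthetical invokes the fact that the pairing function produces only odd-length outputs. There is nothing to add; the ``in particular'' clause is indeed just a specialization.
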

From this, we can derive that encodings for $Z$, i.e., words of the form $\langle \cdot, \cdot \rangle$, never have the same length as $\hUP$-stages. This will be a helpful argument to rule out interactions between $\hUP$-stages and encodings for $Z$.

\section{Oracle Construction}
In this section we will construct an oracle $\B$ such that $\hUP^{\B}$ has no complete set and $\NP^{\B} = \PSPACE^{\B}$. First, we sketch the idea of the construction. Then we define $\B$, followed by the proof that $\B$ is well-defined. Finally, we prove that $\B$ satisfies the desired properties from \autoref{def:desiredproperties}. 
\subparagraph{Construction of \B.}

We construct the oracle $\B$ sequentially.
For each $x \in \Sigma^*$, we decide whether to add words to the oracle.
We give a brief sketch of the construction and its ideas:
\medskip

To \ref{np_conp_sufficient}: Whenever the input $x$ has the form $\langle x',0^{t(x')} \rangle$, we add $\langle x',w \rangle$ for some $w \in \Sigma^{t(x')}$ to the oracle if and only if $x' \in \QBF$.
Since $M(x')$ cannot query words of length $|\langle x',w \rangle|$, the membership of $x'$ to $\QBF$ does not change after adding $\langle x',w \rangle$.
From then on, we never add shorter words to $\B$, thus the encoding persists correct for the finished oracle.
\medskip

To \ref{satcon_sufficient}: On every $\hUP$-stage $n \coloneqq m(i,j)$,
we try to achieve property \ref{f_i_kaputt} or \ref{gi_nicht_simuliert}.
For this, we differentiate between three cases.
\begin{enumerate}
\item\label{sketch:1} If $N_i(F_j(0^n))$ accepts relative to the oracle constructed so far, we leave the stage $n$ empty.
In subsequent iterations, we ensure that added words will not interfere with this accepting path.
This achieves property \ref{gi_nicht_simuliert}, because $0^n \notin W_i$ and $F_j(0^n) \in L(N_i)$.
\item\label{sketch:2} If $N_i(F_j(0^n))$ rejects relative to the oracle constructed so far, we add a word of length $n$ to the oracle such that this computation keeps rejecting.
This achieves property \ref{gi_nicht_simuliert}, because $0^n \in W_i$ and $F_j(0^n) \notin L(N_i)$.
Notice that finding appropriate words may not be possible.
\item\label{sketch:3} If $N_i(F_j(0^n))$ rejects relative to the oracle constructed so far, and 
there is no choice of a word to add such that $N_i(F_j(0^n))$ keeps rejecting (i.e., case \ref{sketch:2} is not possible), 
then we force $N_i(F_j(0^n))$ to accept on two different paths (which is possible, as we will show).
In subsequent iterations, we ensure that added words will not interfere with these accepting paths.
This achieves property \ref{f_i_kaputt}. 
\end{enumerate}
In iterations after a $\hUP$-stage, if we have to add words to maintain \ref{np_conp_sufficient}, we choose them in such a way that property \ref{satcon_sufficient} still remains upheld for the previous stage.
This is captured in \autoref{prod:z}, \ref{prod:z:name}.
Once the next stage is reached, our word length will have increased enough such that the stage before cannot be affected anymore.

\begin{definition}[Oracle construction]\hfill\\\label{def:b}
Define $\B_0 \coloneqq \emptyset$.
For $k \in \N^+$, define ${\B _{k+1} \coloneqq \B_k \cup \emph{\ref{prod:constr:name}}(\enc(k), \B_k)}$ according to \autoref{prod}.
Finally, define $\B \coloneqq \bigcup _{k \in \N} \B_k$.
\end{definition}
\begin{lstlisting}[caption={\namedlabel{prod:constr:name}{\constprod}\ref{prod:constr:name}\texttt{(}$\texttt{x,B}$\texttt{)}},label={prod},abovecaptionskip=-\medskipamount,mathescape=true,escapechar=?,numbers=left,backgroundcolor=\color{white}]
If $\tn{x} \tn{ = } \langle \tn{x',0}^{\tn{t(x')}} \rangle$ and $\tn{M}^{\tn{B}}\tn{(x')}$ accepts:?\label{line:constr:z:condition}?
    If the largest UP-stage $\tn{m(i,j)} \tn{ < } \tn{|x|}$ exists:?\label{line:constr:z:start}?
        Return ?\ref{prod:z:name}?(x',i,j,B)
    Return $\{\tn{x}\}$?\label{line:constr:z:end}? 
Elif $\tn{x = 0}^\tn{m(i,j)}$ for some $\tn{i,j} \in \N$:?\label{line:constr:up:condition}?
    Return ?\ref{prod:up:name}?(i,j,B)?\label{line:constr:up:return}?
Else:
  Return $\emptyset$. 
\end{lstlisting}

\begin{lstlisting}[caption={\namedlabel{prod:up:name}{\upprod}\ref{prod:up:name}\texttt{(i,j,B)}}, % handles \ref{satcon_sufficient}
                   label={prod:up},
                   abovecaptionskip=-\medskipamount, mathescape=true, escapechar=?, numbers=left,backgroundcolor=\color{white}]
n $\coloneqq$ m(i,j)
If $\tn{N}_\tn{i}^{\tn{B}}\tn{(}\tn{F}_\tn{j}^{\tn{B}}\tn{(}\tn{0}^\tn{n}\tn{))}$ accepts:
  Return $\emptyset$?\label{line:up:accepts}\Suppressnumber?
// Here, $\tn{N}_\tn{i}^{\tn{B}}\tn{(F}_\tn{j}^{\tn{B}}\tn{(0}^\tn{n}\tn{))}$ rejects?\Reactivatenumber?
If $\exists \tn{y} \in \Sigma^{\tn{n}}$ s.th. $\tn{N}_\tn{i}^{\tn{B} \cup \{\tn{y}\}}\tn{(}\tn{F}_\tn{j}^{\tn{B} \cup \{\tn{y}\}}\tn{(}\tn{0}^\tn{n}\tn{))}$ rejects:
  Return $\{\tn{y}\}$?\label{line:up:keepsrejecting}?
Else:
  Let $\tn{y,z} \in \Sigma^{\tn{n}}$ s.th. $\tn{N}_\tn{i}^{\tn{B} \cup \{\tn{y,z}\}}\tn{(F}_\tn{j}^{\tn{B} \cup \{\tn{y,z}\}}\tn{(0}^\tn{n}\tn{))}$ accepts on $\geq \tn{2}$ paths ?\label{line:up:twopaths:choose}?
  Return $\{\tn{y,z}\}$?\label{line:up:twopaths:return}?
\end{lstlisting}

\begin{lstlisting}[caption={\namedlabel{prod:z:name}{\zprod}\ref{prod:z:name}\texttt{(x',i,j,B)} % handles \ref{np_conp_sufficient}, while sustaining \ref{satcon_sufficient}
},
                   label={prod:z},
                   abovecaptionskip=-\medskipamount, mathescape=true, escapechar=?, numbers=left,backgroundcolor=\color{white}]
n $\coloneqq$ m(i,j)
If $\tn{N}_\tn{i}^{\tn{B}}\tn{(F}_\tn{j}^{\tn{B}}\tn{(0}^\tn{n}\tn{))}$ accepts on at least one path $\tn{p}_\tn{1}$:?\label{line:z:accepts:condition}\smallskip?
  $\tn{Q} \coloneqq \begin{cases}
      \tn{Q(}\tn{p}_\tn{1}\texttt{)} \cup \tn{Q(}\tn{p}_\tn{2}\texttt{)} 
      & \texttt{ if }\tn{N}_\tn{i}^{\tn{B}}\tn{(F}_\tn{j}^{\tn{B}}\tn{(0}^\tn{n}\tn{))}\texttt{ accepts on another path }\tn{p}_\tn{2} \neq \tn{p}_\tn{1}
      \\ 
      \tn{Q(}\tn{p}_\tn{1}\texttt{)}
      & \texttt{ else }
  \end{cases}$?\label{line:z:accepts:start}?
  Choose $\tn{w} \in \Sigma^{\tn{t(x')}}$ s.th. $\langle \tn{x',w} \rangle \notin \tn{Q}$?\label{line:z:accepts:choose}?
  Return $\{\langle \tn{x',w} \rangle\}$?\label{line:z:accepts:return}\smallskip\Suppressnumber?
// Here, $\tn{N}_\tn{i}^{\tn{B}}\tn{(F}_\tn{j}^{\tn{B}}\tn{(0}^\tn{n}\tn{))}$ rejects?\Reactivatenumber?
If $\exists \tn{y} \in \{\langle \tn{x',w} \rangle \mid \tn{w} \in \Sigma^{\tn{t(x')}}\}$ s.th. $\tn{N}_\tn{i}^{\tn{B} \cup \{\tn{y}\}}\tn{(F}_\tn{j}^{\tn{B} \cup \{\tn{y}\}}\tn{(0}^\tn{n}\tn{))}$ rejects:
  Return $\{\tn{y}\}$?\label{line:z:keepsrejecting}?
Let $\tn{y,z} \in \{\langle \tn{x',w} \rangle \mid \tn{w} \in \Sigma^{\tn{t(x')}}\}$ s.th. $\tn{N}_\tn{i}^{\tn{B} \cup \{\tn{y,z}\}}\tn{(F}_\tn{j}^{\tn{B} \cup \{\tn{y,z}\}}\tn{(0}^\tn{n}\tn{))}$ accepts on $\geq \tn{2}$ paths?\label{line:z:twopaths:choose}?
Return $\{\tn{y,z}\}$?\label{line:z:twopaths:return}?        
\end{lstlisting}

Notice that in \ref{prod:up:name} reaching line \ref{line:up:accepts} corresponds to case \ref{sketch:1} of the construction sketch for \ref{satcon_sufficient},
reaching line \ref{line:up:keepsrejecting} corresponds to case \ref{sketch:2}, 
and reaching lines \ref{line:up:twopaths:choose}--\ref{line:up:twopaths:return} corresponds to case \ref{sketch:3}.
In lines \ref{line:constr:z:start}--\ref{line:constr:z:end} of \ref{prod:constr:name}, because $M$ accepts, we need to add at least one word $\langle x',w \rangle$ to the oracle in order to achieve \ref{np_conp_sufficient}.
\autoref{prod:z}, \ref{prod:z:name}, ensures that a previous UP-stage either remains unaffected by this (lines \ref{line:z:accepts:condition}--\ref{line:z:keepsrejecting}), or at least $i$ now satisfies property \ref{f_i_kaputt} instead (lines \ref{line:z:twopaths:choose}--\ref{line:z:twopaths:return}).

It is not clear that the oracle construction can be performed as stated, 
because lines \ref{line:z:accepts:choose} and \ref{line:z:twopaths:choose} in \ref{prod:z:name}
and line \ref{line:up:twopaths:choose} in \ref{prod:up:name} 
claim the existence of words with complex properties.
The following claims show that these lines can be performed as stated for the calls to \ref{prod:constr:name} in the oracle construction.
\begin{claim}[Line \ref{line:up:twopaths:choose} in \ref{prod:up:name} can be performed]\label{claim:n-two-accept-paths-possible}\hfill\\
    If the line \ref{line:up:twopaths:choose} is reached in \ref{prod:up:name} in the step \ref{prod:constr:name}($\enc(k),\B_k$), then $y$ and $z$ can be chosen as stated.
\end{claim}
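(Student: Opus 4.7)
The plan is to exploit the two facts that we know upon reaching line~\ref{line:up:twopaths:choose}: namely that $N_i^B(F_j^B(0^n))$ rejects, yet for \emph{every} $y \in \Sigma^n$, $N_i^{B\cup\{y\}}(F_j^{B\cup\{y\}}(0^n))$ accepts. The strategy is to isolate one accepting path per singleton extension, then use property~\ref{expo_suchraum} in a counting argument to pick two such extensions that can be combined into a single oracle on which both paths survive and are distinct.

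First, I would set $v \coloneqq F_j^B(0^n)$ and let $Q_F$ be the queries of $F_j^B(0^n)$, so $|Q_F|\le p_j(n)$. Define $Y \coloneqq \Sigma^n \setminus Q_F$, which satisfies $|Y|\ge 2^n - p_j(n)$ and has the useful property that $F_j^{B\cup\{y\}}(0^n) = v$ for every $y\in Y$ (the queries of $F_j$ are undisturbed). Hence $N_i^{B\cup\{y\}}(v)$ accepts; denote by $p_y$ its lexicographically first accepting path. The key preliminary observation is that $y\in Q(p_y)$: otherwise adding $y$ to $B$ would not affect any oracle answer along $p_y$, and $p_y$ would already accept in $N_i^B(v)$, contradicting that this computation rejects.

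Next, I would run a counting argument on the directed graph over $Y$ with an edge $y\to z$ whenever $y\in Q(p_z)$. Each vertex $z$ has in-degree at most $|Q(p_z)|\le p_i(|v|)\le p_i(p_j(n))$, so the number of unordered pairs $\{y,z\}$ carrying at least one edge is bounded by $|Y|\cdot p_i(p_j(n))$. Since \ref{expo_suchraum} gives $2^{n/4} > (p_i(p_j(n)))^2 + p_i(p_j(n)) + 1$, we get $|Y|\gg 2p_i(p_j(n))+1$, and so $\binom{|Y|}{2}$ dominates the bad count. Consequently there exist distinct $y,z\in Y$ with $z\notin Q(p_y)$ and $y\notin Q(p_z)$.

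Finally, I would verify that both $p_y$ and $p_z$ are distinct accepting paths of $N_i^{B\cup\{y,z\}}(F_j^{B\cup\{y,z\}}(0^n))$. Because $y,z\in Y$, the computation of $F_j$ is unchanged and still outputs $v$. Along $p_y$ the oracle $B\cup\{y\}$ and $B\cup\{y,z\}$ agree on all queries (since $z\notin Q(p_y)$), so $p_y$ remains accepting; symmetrically for $p_z$. For distinctness, if $p_y=p_z$ as sequences of nondeterministic choices, then running those choices under $B\cup\{y,z\}$ would yield a single query set that, by the same invariance, must equal both $Q(p_y)$ and $Q(p_z)$; but $y\in Q(p_y)$ while $y\notin Q(p_z)$, a contradiction. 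I expect the main delicacy to be precisely this distinctness step and the careful bookkeeping that queries made in one oracle remain valid in a larger one; restricting the whole argument to $Y$ (rather than all of $\Sigma^n$) is what cleanly decouples the $F_j$ stage from the $N_i$ stage and makes the counting argument go through.
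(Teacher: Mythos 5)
Your proof is correct and follows essentially the same counting strategy as the paper: for each singleton extension, fix a canonical accepting path, bound its query set by $p_i(p_j(n))$, and conclude by a pigeonhole argument that two extension words exist whose paths do not query one another, so that both paths survive (and are distinct) when the extensions are combined. Your restriction to $Y = \Sigma^n \setminus Q_F$ is a welcome tightening over the paper's presentation: by freezing $F_j$'s output at $v$ up front, you avoid having to bundle $F_j$'s own queries into the bookkeeping, and you make explicit that $F_j^{B\cup\{y,z\}}(0^n)=v$, a condition the paper leaves implicit when asserting that $p_y$ is preserved under the combined extension $B\cup\{y,z\}$.
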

\begin{proof}
    Let $x \coloneqq \enc(k)$. When reaching this line, then for some $i,j \in \N$~with ${n \coloneqq m(i,j)}$ we have that 
    $N_i^{\B_k}(F_j^{\B_k}(0^n))$ rejects, 
    and for all extensions $\B_k \cup \{x'\}$ with $x' \in \Sigma^{n}$, 
    $N_i^{\B_k \cup \{x'\}}(F_j^{\B_k \cup \{x'\}}(0^n))$ accepts.
    Since the accepting paths appear after adding $x'$ to $\B_k$, these paths must query $x'$.
    There are $2^n$ words of length $n$ and an accepting path can query at most $p_i(p_j(n))$ of these words.
    Let $X$ be a set consisting of $(p_i(p_j(n)))^2+p_i(p_j(n)) + 1$ pairwise different words $x'$ of length $n$.
    By \ref{expo_suchraum}, 
    \[2^{n/4} > (p_i(p_j(n)))^2+p_i(p_j(n)) + 1,\]
    whereby such a set exists.
    We show that there must be two different words $y, z \in X$ 
    such that $N_i^{\B_k \cup \{y\}}(F_j^{\B_k \cup \{y\}}(0^n))$ (resp., $N_i^{\B_k \cup \{z\}}(F_j^{\B_k \cup \{z\}}(0^n))$) accepts 
    and does not query $z$ (resp., $y$) on some accepting path. 

    To choose $y$, we look at $X' \subseteq X$ consisting of $p_i(p_j(n))+1$ pairwise different words.
    When extending $\B_k$ by a word from $X'$ for each word in $X'$ separately, the resulting leftmost accepting paths query at most 
    \[p_i(p_j(n)) \cdot \card{X'} = p_i(p_j(n)) \cdot (p_i(p_j(n)) + 1) = (p_i(p_j(n)))^2 + p_i(p_j(n))\]
    different words in total.
    Hence, there is at least one unqueried word $y \in X$. 

    To choose $z$, consider $N_i^{\B_k \cup \{y\}}(F_j^{\B_k \cup \{y\}}(0^n))$, which queries at most $p_j(p_i(n))$ words on the leftmost accepting path.
    Hence, there is some unqueried word $z \in X$. 

    So, $N_i^{\B_k \cup \{y\}}(F_j^{\B_k \cup \{y\}}(0^n))$ accepts on a path which queries $y$ but not $z$, and $N_i^{\B_k \cup \{z\}}(F_j^{\B_k \cup \{z\}}(0^n))$ accepts on a path which queries $z$ but not $y$.
    Thus, $y \not = z$, both accepting paths are different and both are preserved when extending by both $y$ and $z$, resulting in at least two different accepting paths for $N_i^{\B_k \cup \{y, z\}}(F_j^{\B_k \cup \{y,z\}}(0^n))$.
\end{proof}
\begin{claim}[Line \ref{line:z:twopaths:choose} in \ref{prod:z:name} can be performed]\label{claim:c-two-accept-paths-possible}\hfill\\
If the line \ref{line:z:twopaths:choose} is reached in \ref{prod:z:name} in the step \ref{prod:constr:name}($\enc(k),\B_k$), then $y$ and $z$ can be chosen as stated.
\end{claim}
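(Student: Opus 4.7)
The plan is to mirror the pigeonhole argument of Claim~\ref{claim:n-two-accept-paths-possible}, but with the candidate set restricted to the strings that \ref{prod:z:name} is allowed to add, namely $X \coloneqq \{\langle x',w\rangle \mid w \in \Sigma^{t(x')}\}$. Because the pairing function depends only on the lengths of its arguments, every element of $X$ has the common length $|x| = |\langle x', 0^{t(x')}\rangle|$, and injectivity of $\langle\cdot,\cdot\rangle$ gives $|X| = 2^{t(x')}$. The first task is to verify that $X$ is large enough for the counting argument, i.e.\ that $2^{t(x')} > (p_i(p_j(n)))^2 + p_i(p_j(n)) + 1$.

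For this size bound I would invoke Claim~\ref{claim:listencoding} with the substitutions $x \leftarrow x'$ and $w \leftarrow 0^{t(x')}$. Since $k \geq 3$, we have $t(x') = |x'|^k + k \geq |x'| + 3$, so Claim~\ref{claim:listencoding} yields $t(x') \geq |x|/4$. Line~\ref{line:z:twopaths:choose} is only reached after \ref{prod:constr:name} has selected a largest $\hUP$-stage $n = m(i,j)$ with $n < |x|$, hence $t(x') > n/4$. Combining with requirement \ref{expo_suchraum} gives $2^{t(x')} > 2^{n/4} > (p_i(p_j(n)))^2 + p_i(p_j(n)) + 1$, which is exactly the slack that powered the argument in Claim~\ref{claim:n-two-accept-paths-possible}.

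Once this bound is in hand, the remainder of the proof runs in direct parallel to Claim~\ref{claim:n-two-accept-paths-possible}. Reaching line~\ref{line:z:twopaths:choose} means that $N_i^{B_k}(F_j^{B_k}(0^n))$ rejects, while for every $y' \in X$ the computation $N_i^{B_k \cup \{y'\}}(F_j^{B_k \cup \{y'\}}(0^n))$ accepts; any such new accepting path must therefore query the newly added word. Choosing some $X' \subseteq X$ of size $p_i(p_j(n))+1$ and looking at the leftmost accepting path of each extension $B_k \cup \{y'\}$ with $y' \in X'$, the total number of queried strings is at most $p_i(p_j(n))\cdot(p_i(p_j(n))+1)$, leaving some $y \in X$ unqueried by our size bound. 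A second application of the same counting to the single run $N_i^{B_k \cup \{y\}}(F_j^{B_k \cup \{y\}}(0^n))$ yields a $z \in X$ that is not queried on its leftmost accepting path. Both of these paths survive the simultaneous extension by $\{y,z\}$ and remain distinct, since they disagree on which of $y$ and $z$ is queried, so $N_i^{B_k \cup \{y,z\}}(F_j^{B_k \cup \{y,z\}}(0^n))$ indeed accepts on at least two paths.

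The only genuine obstacle compared to Claim~\ref{claim:n-two-accept-paths-possible} is the size bound on $X$: the candidate set has size only $2^{t(x')}$ rather than $2^n$, so one must use Claim~\ref{claim:listencoding} together with $k \geq 3$ to convert the relation $n < |x|$ into the usable inequality $t(x') > n/4$, after which \ref{expo_suchraum} finishes the job and the rest of the argument is a routine reprise of the earlier claim.
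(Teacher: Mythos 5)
Your proposal is correct and follows essentially the same route as the paper: identify the candidate set $Y=\{\langle x',w\rangle\mid w\in\Sigma^{t(x')}\}$, use Claim~\ref{claim:listencoding} (with $k\geq 3$ guaranteeing $t(x')\geq|x'|+3$) to obtain $t(x')\geq|x|/4$ and hence $|Y|\geq 2^{n/4}$, invoke requirement~\ref{expo_suchraum}, and then rerun the pigeonhole argument of Claim~\ref{claim:n-two-accept-paths-possible}. The only cosmetic difference is your use of the strict bound $n<|x|$ from line~\ref{line:constr:z:start} of \ref{prod:constr:name} (the paper writes $n\leq|x|$), but both versions yield the needed inequality.
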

\begin{proof}
Let $x \coloneqq \enc(k)$. When reaching this line, 
then for $x = \langle x',0^{t(x')} \rangle$ 
and the biggest $\hUP$-stage $n \coloneqq m(i,j) \leq |x|$ 
we have that $N_i^{\B_k}(F_j^{\B_k}(0^n))$ rejects,
but for all extensions $\B_k \cup \{y\}$ with $y \in \{\langle x', w \rangle \mid w \in \Sigma^{t(x')}\} \eqqcolon Y$, 
the computation $N_i^{\B_k \cup \{y\}}(F_j^{\B_k \cup \{y\}}(0^n))$ accepts.
Since the accepting paths appear after adding $y$ to $\B_k$, these paths must query $y$.
By Claim \ref{claim:listencoding},  $|t(x')| \geq |x|/4 \geq n/4$.
Hence, there are $\geq 2^{n/4}$ words $w \in \Sigma^{t(x')}$, and thus $\card{Y} \geq 2^{n/4}$.
An accepting path can query at most $p_i(p_j(n))$ of these words.
Let $X$ be a set consisting of $(p_i(p_j(n)))^2+p_i(p_j(n)) + 1$ pairwise different words $y \in Y$.
By \ref{expo_suchraum}, 
\[2^{n/4} > (p_i(p_j(n)))^2+p_i(p_j(n)) + 1,\]
whereby such a set exists. From here on, we can proceed exactly as in the proof of Claim \ref{claim:n-two-accept-paths-possible} to find two fitting words $y$ and $z$.\qedhere
\end{proof}
\begin{claim}[Line \ref{line:z:accepts:choose} in \ref{prod:z:name} can be performed]\label{claim:fix-accept-path}\hfill\\
If the line \ref{line:z:accepts:choose} is reached in \ref{prod:z:name} in the step \ref{prod:constr:name}($\enc(k),\B_k$), then $w$ can be chosen accordingly.
\end{claim}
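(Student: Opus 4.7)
The plan is to count candidate words against the size of $Q$: the set $\{\langle x',w\rangle \mid w \in \Sigma^{t(x')}\}$ has $2^{t(x')}$ elements, whereas $Q$ consists of oracle queries along at most two polynomial-time computation paths, so it has only polynomial size in $p_i(p_j(n))$. All that is needed is to link $t(x')$ and $n := m(i,j)$ so that requirement \ref{expo_suchraum} can finish the argument.

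\noindent The first step is to invoke Claim \ref{claim:listencoding} with the padding string $0^{t(x')}$. Its hypothesis $|0^{t(x')}| \geq |x'| + 3$ is satisfied because $t(x') = |x'|^k + k$ with $k \geq 3$, so a trivial case check gives $t(x') \geq |x'| + 3$ in all cases. Claim \ref{claim:listencoding} then yields $t(x') \geq |\langle x', 0^{t(x')}\rangle|/4 = |x|/4$. Since line \ref{line:constr:z:start} of \ref{prod:constr:name} chose $n = m(i,j)$ to be the largest UP-stage strictly less than $|x|$, I obtain $t(x') > n/4$, and therefore the candidate set has more than $2^{n/4}$ elements.

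\noindent The second step is to bound $|Q|$. By construction of $Q$ in \ref{prod:z:name}, it is the union of queries along at most two paths of $N_i^{\B_k}(F_j^{\B_k}(0^n))$. On any such path, the deterministic phase $F_j^{\B_k}(0^n)$ makes at most $p_j(n)$ queries, and the nondeterministic phase $N_i^{\B_k}$ runs on a word of length at most $p_j(n)$ and thus makes at most $p_i(p_j(n))$ further queries. Hence $|Q| \leq 2(p_j(n) + p_i(p_j(n)))$, which is dominated by $(p_i(p_j(n)))^2 + p_i(p_j(n)) + 1$. By requirement \ref{expo_suchraum}, the latter quantity is in turn strictly less than $2^{n/4}$, which is at most the number of candidates. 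A candidate $\langle x', w\rangle \notin Q$ must therefore exist, and we pick such a $w$.

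\noindent The only mildly nontrivial point is the first step: translating from the length of the dyadic encoding $|x|$ back to a lower bound on $t(x')$ via Claim \ref{claim:listencoding}. After that, the argument is a straightforward counting bound and is in fact easier than the analogous steps in Claims \ref{claim:n-two-accept-paths-possible} and \ref{claim:c-two-accept-paths-possible}, since here we only need to avoid a single polynomial-size set rather than produce two words with a near-disjointness property.
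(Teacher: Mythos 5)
Your argument is essentially the paper's own: bound the candidate set from below by $2^{n/4}$ via Claim~\ref{claim:listencoding}, bound $\card{Q}$ by a polynomial in $p_i(p_j(n))$, and invoke requirement~\ref{expo_suchraum} to separate the two. You are, if anything, slightly more careful than the paper --- you explicitly verify the hypothesis of Claim~\ref{claim:listencoding} and you count the queries of $F_j$ in addition to those of $N_i$ --- but both refinements only add slack that the crude bound $(p_i(p_j(n)))^2 + p_i(p_j(n)) + 1 < 2^{n/4}$ absorbs anyway.
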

\begin{proof}
Let $x \coloneqq \enc(k)$. When reaching this line, then for $x = \langle x', 0^{t(x')} \rangle$ and the biggest $\hUP$-stage ${n \coloneqq m(i,j) \leq |x|}$ it holds that $N_i^{\B_k}(F_j^{\B_k}(0^n))$ accepts. In the worst case, there are at least two accepting paths.
Up to two accepting paths $p_1$ and $p_2$ query at most $2p_i(p_j(n))$ words in total.
Together with \ref{expo_suchraum}, we get
\[\card{Q(p_1) \cup Q(p_2)} \leq 2p_i(p_j(n)) \leq (p_i(p_j(n)))^2 + p_i(p_j(n)) < 2^{n/4}.\]
By Claim \ref{claim:listencoding}, $|t(x')| \geq |x|/4 \geq n/4$ and thus
\[\card{\Sigma^{t(x')}} \geq 2^{n/4}  > (p_i(p_j(n)))^2 + p_i(p_j(n)) + 1 \geq \card{Q(p_1) \cup Q(p_2)}.\]
Consequently, there is some $w \in \Sigma^{t(x')}$ with $\langle x',w \rangle \notin (Q(p_1) \cup Q(p_2))$. 
\end{proof}
The Claims \ref{claim:n-two-accept-paths-possible}, \ref{claim:c-two-accept-paths-possible}, and \ref{claim:fix-accept-path} show that $\B _{k+1} \coloneqq \B_k \cup \texttt{construct}(\enc(k),\B_k)$ is well-defined.
We make three further observations to the procedure \texttt{construct}.
\begin{observation}\label{obs:monotoneconstruction}
    For $k \in \N$, \emph{\ref{prod:constr:name}}$(\enc(k),\B_k)$ only adds words of length $|\enc(k)|$.
\end{observation}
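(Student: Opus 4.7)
The plan is to perform a direct case analysis on the procedure \ref{prod:constr:name}$(x, \B_k)$ where $x \coloneqq \enc(k)$, verifying in each branch that every word contributed to the oracle has length exactly $|x|$.

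First, I would dispatch the easy branches of \ref{prod:constr:name}. If neither of the main conditions triggers, the procedure returns $\emptyset$, so nothing is added. If $x = 0^{m(i,j)}$ for some $i,j \in \N$, the returned set comes from \ref{prod:up:name}$(i,j,\B_k)$, whose three branches yield $\emptyset$, a single word $y \in \Sigma^{n}$, or a pair $y,z \in \Sigma^{n}$, where $n = m(i,j) = |x|$. In all subcases the added words have length $|x|$.

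Second, if $x = \langle x', 0^{t(x')} \rangle$ and $M^{\B_k}(x')$ accepts, the procedure either returns $\{x\}$, which trivially has length $|x|$, or defers to \ref{prod:z:name}$(x', i, j, \B_k)$. In the latter case every word added has the shape $\langle x', w \rangle$ with $w \in \Sigma^{t(x')}$. Because the pairing function assigns length $|\langle x', w \rangle| = 2(|x'| + |w| + 2) + 1$, which depends only on $|x'|$ and $|w|$, and because $|w| = t(x') = |0^{t(x')}|$, every such word satisfies $|\langle x', w \rangle| = |\langle x', 0^{t(x')} \rangle| = |x|$.

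There is no real obstacle here: the observation follows mechanically from inspecting the three procedures. The only subtlety worth highlighting is the length equality in the \ref{prod:z:name} branch, which is immediate from the explicit length formula for the pairing function fixed in the preliminaries.
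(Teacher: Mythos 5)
Your case analysis is correct and matches the route the paper implicitly intends (the statement is labelled an observation precisely because it follows by direct inspection of the three procedures, which is what you carry out). The only nontrivial point — that the words $\langle x', w\rangle$ produced in \ref{prod:z:name} have the same length as $\langle x', 0^{t(x')}\rangle$ because $|w| = t(x')$ and the pairing function's output length depends only on component lengths — is exactly the detail you highlight and justify correctly.
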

The next two observations follow from \autoref{obs:monotoneconstruction} combined with \autoref{obs:even-odd}.
\begin{observation}\label{obs:0n-onetime}
Let $n \in H_i$ for some $i \in \N$ and $k \coloneqq \enc^{-1}(0^n)$.
Only the step \texttt{\emph{construct}}$(0^n,\B_k)$ can add words of length $n$ to $\B$.
\end{observation}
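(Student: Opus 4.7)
The plan is to trace which iterations of \ref{prod:constr:name} can possibly add words of length $n$ and rule out all of them except the iteration processing $0^n$. By \autoref{obs:monotoneconstruction}, any step \ref{prod:constr:name}$(\enc(k'),\B_{k'})$ that contributes words to $\B$ contributes only words of length $|\enc(k')|$. So the first step is to assume some step $k'$ adds a word of length $n$, set $x \coloneqq \enc(k')$, and note that necessarily $|x| = n$.

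Next I would use $n \in H_i$ together with \autoref{cor:Hi-i}\ref{cor:Hi-i} to conclude that $n$ is even. This is the key length-parity input: the pairing function outputs strings of odd length $2(|u_1|+\cdots+|u_n|+n)+1$, so by \autoref{obs:even-odd} a string of even length cannot lie in $\ran(\langle\cdot\rangle)$. In particular $x \notin \ran(\langle\cdot\rangle)$, hence the guard of line \ref{line:constr:z:condition} in \ref{prod:constr:name} (which requires $x = \langle x',0^{t(x')}\rangle$) fails, and the entire first branch, which is the only place where the auxiliary routine \ref{prod:z:name} is invoked, contributes nothing.

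The only remaining branch that can add any words is the \texttt{elif} at line \ref{line:constr:up:condition}, which fires precisely when $x = 0^{m(i',j')}$ for some $i',j' \in \N$. Here $|x| = m(i',j')$, and combined with $|x| = n$ this yields $x = 0^n$. Consequently $\enc(k') = 0^n$, so $k' = \enc^{-1}(0^n) = k$. Thus the only iteration that can possibly add words of length $n$ is \ref{prod:constr:name}$(0^n,\B_k)$, which proves the observation. I do not anticipate any real obstacle; the entire argument is bookkeeping once the parity of $n$ is used to eliminate the encoding branch.
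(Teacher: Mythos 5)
Your argument is correct and takes essentially the same route the paper intends: the paper states Observation~\ref{obs:0n-onetime} as following from Observation~\ref{obs:monotoneconstruction} together with Observation~\ref{obs:even-odd}, and your write-up is exactly the bookkeeping that combination amounts to — use Observation~\ref{obs:monotoneconstruction} to restrict attention to the step processing a word of length $n$, use the evenness of elements of $H_i$ (Corollary~\ref{cor:Hi-i}\ref{cor:Hi-i}) plus the odd-length range of $\langle\cdot\rangle$ to disable the first branch, and note the \texttt{elif} branch forces $\enc(k')=0^n$, i.e.\ $k'=k$. One small remark: Observation~\ref{obs:even-odd} as stated quantifies only over $k\in\ran(m)$, not over all $n\in H_i$; you correctly extracted and reused the underlying parity reason (the pairing function only produces odd lengths) rather than invoking the observation verbatim, which is the right move since not every element of $H_i$ need lie in $\ran(m)$.
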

\begin{observation}\label{obs:codword-onetime}
Let $x \coloneqq \langle x', 0^{t(x')} \rangle$ for $x' \in \Sigma^*$ and $k \coloneqq \enc^{-1}(x)$.
Only the step \texttt{\emph{construct}}$(x,\B_k)$ can add words of the form $\langle x',w \rangle$ with $w \in \Sigma^{t(x')}$ to $\B$.
\end{observation}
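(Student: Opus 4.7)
The plan is to combine Observation \ref{obs:monotoneconstruction} (which pins down the length of any word added at a given step) with the injectivity of the pairing function $\langle \cdot, \cdot \rangle$ and with Observation \ref{obs:even-odd} (which forbids pairing-encoded words at lengths in $\ran(m)$). The statement then reduces to a short case analysis on which branch of \ref{prod:constr:name} is taken at an arbitrary step $k'$.

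First I would note that every word in the set $\{\langle x', w\rangle \mid w \in \Sigma^{t(x')}\}$ has the same fixed length $\ell \coloneqq 2(|x'| + t(x') + 2) + 1$, which is odd. By Observation \ref{obs:monotoneconstruction}, only steps $k'$ with $|\enc(k')| = \ell$ can contribute any such word, so the task is to show that among these, only $k' = k$ actually adds one.

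For the case analysis, the \ref{prod:up:name}-branch of \ref{prod:constr:name}$(\enc(k'), \B_{k'})$ is entered only when $\enc(k') = 0^{m(i,j)}$, and every word it returns has length $m(i,j) \in H_i$, which is even; by Observation \ref{obs:even-odd} no such length lies in $\ran(\langle \cdot, \cdot \rangle)$, so no word of the required form can appear from this branch. The \texttt{Else}-branch trivially adds nothing. In the remaining branch, $\enc(k') = \langle x'', 0^{t(x'')}\rangle$ for some $x''$, and the words that can be added are either $\enc(k')$ itself (via the direct return) or, via the call \ref{prod:z:name}$(x'', i, j, \B_{k'})$, words of the form $\langle x'', w''\rangle$ with $w'' \in \Sigma^{t(x'')}$, as is visible by inspecting the three possible return sites of \ref{prod:z:name}.

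In both sub-cases the outer pairing-component of every added word is $x''$, so for such a word to equal some $\langle x', w\rangle$ with $w \in \Sigma^{t(x')}$, injectivity of the pairing function forces $x'' = x'$, hence $\enc(k') = \langle x', 0^{t(x')}\rangle = x$ and therefore $k' = k$. I do not anticipate any real obstacle; the only delicate point is ruling out interference from the \ref{prod:up:name}-branch, which is precisely the parity argument that Observation \ref{obs:even-odd} was introduced to supply.
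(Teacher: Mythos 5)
Your proposal is correct and fleshes out exactly what the paper intends: the paper itself gives no separate proof of this observation, merely remarking that it follows from Observation~\ref{obs:monotoneconstruction} together with Observation~\ref{obs:even-odd}, and your argument is the natural expansion of that remark. You correctly pin the length of the target words to the single odd value $\ell = |x|$, invoke Observation~\ref{obs:monotoneconstruction} to restrict attention to steps $k'$ with $|\enc(k')| = \ell$, rule out the \ref{prod:up:name}-branch via the parity argument of Observation~\ref{obs:even-odd}, check by inspection that the remaining branch (whether through the direct return or via \ref{prod:z:name}) only produces words whose first pairing component is $x''$ satisfying $\enc(k') = \langle x'', 0^{t(x'')}\rangle$, and finish by injectivity of the pairing to force $x'' = x'$ and hence $k' = k$. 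No gap.
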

\subparagraph{Proving the Properties.}
Finally, we show that the properties \ref{np_conp_sufficient} and \ref{satcon_sufficient} hold relative to $\B$.
We start with property \ref{np_conp_sufficient}.
\begin{lemma}\label{lem:prob1}
The oracle $\B$ satisfies property \emph{\ref{np_conp_sufficient}}.
\end{lemma}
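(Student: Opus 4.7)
The plan is to reduce property \ref{np_conp_sufficient} to a single-step analysis of the construction. Fix an arbitrary $x \in \Sigma^*$, set $y \coloneqq \langle x, 0^{t(x)} \rangle$, and let $k \coloneqq \enc^{-1}(y)$ be the construction step that processes $y$. I will show (i) $M^{\B_k}(x)$ and $M^{\B}(x)$ have the same acceptance behavior, and (ii) step $k$ adds some word of the form $\langle x, w \rangle$ with $w \in \Sigma^{t(x)}$ to $\B$ exactly when $M^{\B_k}(x)$ accepts. Combining (i) and (ii) with the definition of $Z^\B$ and the assumption that $M$ decides $\QBF$ yields the equivalence $x \in \QBF^\B \iff x \in Z^\B$.

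For (i), the key quantitative observation is that $|y| = 2|x| + 2t(x) + 5 > t(x)$. Since $M$ runs in space at most $t(x)$ (including the oracle tape), every query of $M^{\B_k}(x)$ has length at most $t(x) < |y|$. By \autoref{obs:monotoneconstruction}, the construction step $k'$ adds only words of length $|\enc(k')|$, and for $k' \geq k$ we have $|\enc(k')| \geq |\enc(k)| = |y| > t(x)$. Therefore $\B \cap \Sigma^{\leq t(x)} = \B_k \cap \Sigma^{\leq t(x)}$, so $M^{\B_k}(x)$ and $M^\B(x)$ make exactly the same oracle queries with the same answers, and hence the same decision.

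For (ii), I split into two cases according to the branch taken by \ref{prod:constr:name} on input $(\enc(k),\B_k) = (y,\B_k)$. If $M^{\B_k}(x)$ accepts, then the guard on line~\ref{line:constr:z:condition} is satisfied; either line~\ref{line:constr:z:end} adds the word $y = \langle x, 0^{t(x)} \rangle$ directly, or \ref{prod:z:name} is invoked and, by the three possible return lines (\ref{line:z:accepts:return}, \ref{line:z:keepsrejecting}, \ref{line:z:twopaths:return}), returns a non-empty subset of $\{\langle x, w \rangle \mid w \in \Sigma^{t(x)}\}$. In every subcase at least one word $\langle x, w \rangle$ with $|w| = t(x)$ is placed in $\B$, witnessing $x \in Z^\B$. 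Conversely, if $M^{\B_k}(x)$ rejects, then step $k$ enters the elif or else branch of \ref{prod:constr:name} and in particular does not add any word of the form $\langle x, w \rangle$ with $w \in \Sigma^{t(x)}$. By \autoref{obs:codword-onetime}, step $k$ is the only step that could add such a word to $\B$, so no such word is in $\B$, giving $x \notin Z^\B$.

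The mildly delicate point is verifying the \emph{only} direction in the rejecting case: I have to make sure that no other step of the construction slips in a word $\langle x, w \rangle$ with $|w| = t(x)$, which is precisely the content of \autoref{obs:codword-onetime} combined with \autoref{obs:even-odd} (the length $|y|$ is odd, so $\hUP$-stage steps never produce such encodings). No new combinatorial work is required; the proof is essentially a bookkeeping application of the observations and claims already established.
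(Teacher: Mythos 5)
Your proof is correct and follows essentially the same route as the paper: isolate the single construction step that processes $\langle x, 0^{t(x)} \rangle$, use Observation~\ref{obs:codword-onetime} together with Observation~\ref{obs:even-odd} to argue that this step is the only place a witness $\langle x,w\rangle$ for $Z^\B$ can appear, show that the step adds such a word exactly when $M$ (with the current oracle) accepts $x$, and conclude with the stability argument that $M(x)$'s verdict cannot change because all words added from this step onward exceed the space bound $t(x)$. The only cosmetic difference is an index offset (you write the step as $\constprod(\enc(k),\B_k)$, whereas the paper phrases it as $\B_k = \constprod(\cdots,\B_{k-1})$), and you split the argument into two numbered parts where the paper chains four labeled equivalences; the mathematical content is identical.
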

\begin{proof}
    First, recall that by definition of $Z$, 
    \begin{equation}\label{eq:def-z}
        x \in Z^\B \Longleftrightarrow \exists w \in \Sigma^{t(x)}\colon \langle x, w \rangle \in \B 
    \end{equation}
    Let $x \in \Sigma^*$ be arbitrary,
    and consider the step of the oracle construction where $\B_k$ is defined as $\ref{prod:constr:name}(\langle x, 0^{t(x)} \rangle, \B_{k-1})$.
    By Observation \ref{obs:codword-onetime}, only \ref{prod:constr:name}$(\langle x, 0^{t(x)} \rangle, \B_{k-1})$ may add words of the form $\langle x, w \rangle$ for $w \in \Sigma^{t(x)}$ to $\B$.
    From this we can draw $x \notin Z^{\B_{k-1}}$, and further, together with $\B_k \subseteq \B$, 
    \begin{equation}\label{eq:z_only_at_k}
        \exists w \in \Sigma^{t(x)}: \langle x, w \rangle \in \B \Longleftrightarrow \exists w \in \Sigma^{t(x)}\colon \langle x, w \rangle \in \B_k 
    \end{equation}
    The lines \ref{line:constr:up:condition} and \ref{line:constr:up:return} in \ref{prod:constr:name}$(\langle x, 0^{t(x)} \rangle, \B_{k-1})$ are skipped, 
    since, by \ref{satstufen_form}, $m(i,j)$ is always even, whereas $\langle x,0^{t(x)}\rangle$ has odd length (see \autoref{obs:even-odd}). 
    Hence, \ref{prod:up:name} will not be entered. 
    Since at the lines \ref{line:constr:z:start}--\ref{line:constr:z:end} of \ref{prod:constr:name} always at least one word is added to the oracle, 
    \ref{prod:constr:name}$(\langle x, 0^{t(x)} \rangle, \B_{k-1})$ adds a word $\langle x, w \rangle$ with $w \in \Sigma^{t(x)}$ to $\B_{k-1}$ if and only if the condition in line \ref{line:constr:z:condition} is met and these lines are entered, i.e., if $M^{\B_{k-1}}(x)$ accepts.
    This gives
    \begin{equation}\label{eq:2}
    x \in \QBF^{\B_{k-1}} \Longleftrightarrow \exists w \in \Sigma^{t(x)}\colon \langle x, w \rangle \in \B_k
    \end{equation} 
    Since $M^{\B_{k-1}}(x)$ has a space bound of $t(x)$, it cannot ask words longer than $t(x)$. 
    Thereby $M^{\B_{k-1}}(x)$ cannot recognize a transition to the oracle $\B_k$, because we only add words $\langle x, w \rangle$ with $w \in \Sigma^{t(x)}$, which have length $> t(x)$.
    So, $M^{\B_k}(x)$ accepts if and only if $M^{\B_{k-1}}(x)$ accepts. 
    For the same reason, $M^{\B_{k}}(x)$ cannot recognize a transition to the oracle $\B$, because
    by Observation \ref{obs:monotoneconstruction}, all words in the set $\B \setminus \B_k$
    have length $\geq |\langle x, 0^{t(x)} \rangle|$, and thus $M^\B(x)$ accepts if and only if $M^{\B_k}(x)$ accepts.
    Combining these two facts gives
    \begin{equation}\label{eq:sat-indifferent}
        x \in \QBF^\B \Longleftrightarrow x \in \QBF^{\B_{k-1}} 
    \end{equation}
    In total, we conclude
    \begin{align*}
        x \in \QBF^\B 
        \overset{\text{\eqref{eq:sat-indifferent}}}{\Longleftrightarrow} x \in \QBF^{\B_{k-1}} 
        &\overset{\text{\eqref{eq:2}}}{\Longleftrightarrow} \exists w \in \Sigma^{t(x)}\colon \langle x, w \rangle \in \B_k \\
        &\overset{\text{\eqref{eq:z_only_at_k}}}{\Longleftrightarrow} \exists w \in \Sigma^{t(x)}\colon \langle x, w \rangle \in \B \\
        &\overset{\text{\eqref{eq:def-z}}}{\Longleftrightarrow} x \in Z^\B 
    \end{align*}
    Since $x \in \Sigma^*$ was chosen arbitrarily, property \ref{np_conp_sufficient} holds.
\end{proof}
Before proving \autoref{lem:prob2}, i.e., proving that property \ref{satcon_sufficient} holds, we prove the helpful claim that \texttt{construct} does not alter certain accepting paths.
\begin{claim}\label{claim:keep-accept}
    Let $i,j,k \in \N$ such that $k > \enc^{-1}(0^{m(i,j)})$, and $n \coloneqq m(i,j)$. 
    If $N_i^{\B_k}(F_j^{\B_k}(0^n))$ accepts (resp., accepts on more than one path), then so does $N_i^{\B_{k+1}}(F_j^{\B_{k+1}}(0^n))$.
\end{claim}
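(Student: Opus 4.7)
My plan is to leverage the monotonicity of the construction: since $\B_{k+1}=\B_k\cup$ \ref{prod:constr:name}$(\enc(k),\B_k)$, an accepting path of $N_i^{\B_k}(F_j^{\B_k}(0^n))$ survives in $\B_{k+1}$ as long as no newly added word is queried along that path. Because the entire computation runs in time at most $p_i(p_j(n))$, every such query has length at most $p_i(p_j(n))$. So it suffices to verify, for each way the step \ref{prod:constr:name} can add words, that these are either too long to be queried or are deliberately chosen to avoid the queries of the accepting paths in question.

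Let $x\coloneqq\enc(k)$. The hypothesis $k>\enc^{-1}(0^n)$ gives $x>_{\text{lex}}0^n$, so $|x|\geq n$. I would first dispose of the boundary case $|x|=n$: then $x\neq 0^n$, and since $n$ is even by \ref{satstufen_form} while every word in $\ran(\langle\cdot\rangle)$ has odd length by \autoref{obs:even-odd}, $x$ matches none of the patterns in \ref{prod:constr:name}, so nothing is added. For $|x|>n$ I split into two cases. (i) If $x=0^{m(i',j')}$ for some $(i',j')$, then $m(i',j')=|x|>n$, so by \ref{m_monoton} we have $m(i',j')>p_i(p_j(n))$, and the words of length $m(i',j')$ added by \ref{prod:up:name} exceed the query length bound and cannot affect any path of $N_i^{\B_k}(F_j^{\B_k}(0^n))$. (ii) If $x=\langle x',0^{t(x')}\rangle$ and $M^{\B_k}(x')$ accepts, let $n_0$ be the largest UP-stage strictly below $|x|$; since $n$ itself qualifies, $n_0\geq n$, and when $n_0>n$ the added word has length $|x|>n_0>p_i(p_j(n))$ by \ref{m_monoton}, again ruling out any query.

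The only delicate subcase, and the main obstacle I anticipate, is $n_0=n$: here the word added by \ref{prod:z:name} has length $|x|$, which I cannot bound below by $p_i(p_j(n))$ from the general parameters alone. I would handle this by invoking the hypothesis directly: because $N_i^{\B_k}(F_j^{\B_k}(0^n))$ accepts, \ref{prod:z:name} enters the branch at line \ref{line:z:accepts:condition}, fixes an accepting path $p_1$ (and a second accepting path $p_2$ if one exists), and by \autoref{claim:fix-accept-path} selects $w$ so that $\langle x',w\rangle\notin Q(p_1)\cup Q(p_2)$. Hence $p_1$, together with $p_2$ when it exists, remains an accepting path relative to $\B_{k+1}$; this simultaneously delivers the ``accepts on more than one path'' variant of the claim, since $p_2$ is explicitly protected whenever a second accepting path exists. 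Combining the cases then yields the statement.
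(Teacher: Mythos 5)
Your proposal is correct and takes essentially the same approach as the paper: both rely on Observation~\ref{obs:monotoneconstruction} and requirement~\ref{m_monoton} to dismiss added words that are too long to be queried, and on line~\ref{line:z:accepts:choose} of \ref{prod:z:name} (via Claim~\ref{claim:fix-accept-path}) to protect the accepting path(s) in the remaining delicate case. The only cosmetic difference is that you split by the syntactic form of $\enc(k)$, whereas the paper splits once on whether $|\enc(k)| > p_i(p_j(n))$, which subsumes your ``too long'' cases in a single stroke before reasoning about which branch of the construction can fire.
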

\begin{proof}
    If $\enc(k) \geq_{\text{lex}} 0^{p_i(p_j(n))+1}$, then by Observation \ref{obs:monotoneconstruction}, only words of length $> p_i(p_j(n))$ are added to $\B_k$.
    Hence, all paths of $N_i^{\B_k}(F_j^{\B_k}(0^n))$ and $N_i^{\B_{k+1}}(F_j^{\B_{k+1}}(0^n))$ are the same, from which the claimed statement follows.

    Otherwise, $\enc(k) \leqlex 1^{p_i(p_j(n))}$.
    By \ref{m_monoton}, $m(i,j)$ has to be the biggest $\hUP$-stage $\leq |\enc(k)|$.
    Consider the step \ref{prod:constr:name}($\enc(k),\B_{k}$).
    Either $\B_{k+1} = \B_{k}$ and the claim holds, or some words are added to $\B_{k}$.
    The procedure \ref{prod:up:name} is not entered, because $\enc(k) >_{\text{lex}} 0^{n}$, so the condition in line \ref{line:constr:up:condition} of \ref{prod:constr:name} is false, 
    whereby any added words would need to be added by \ref{prod:z:name}.
    Further, since $N_i^{\B_{k}}(F_j^{\B_{k}}(0^{n}))$ accepts, this can only happen via the lines \ref{line:z:accepts:start} to \ref{line:z:accepts:return}.
    Here, line \ref{line:z:accepts:choose} makes sure that $N_i^{\B_{k+1}}(F_j^{\B_{k+1}}(0^{n}))$ remains accepting (resp., remains accepting on more than one path), since the respective paths do not query the chosen words.
\end{proof}
\begin{corollary}\label{cor:keep-accept}
    Let $i,j,k \in \N$ such that $k > \enc^{-1}(0^{m(i,j)})$,
    and $n \coloneqq m(i,j)$.
    If $N_i^{\B_k}(F_j^{\B_k}(0^n))$ accepts (resp., accepts on more than one path), then so does $N_i^{\B}(F_j^{\B}(0^n))$.
\end{corollary}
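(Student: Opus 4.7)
The plan is to combine Claim \ref{claim:keep-accept} with a stabilization argument for the oracle, using Observation \ref{obs:monotoneconstruction}. The Claim gives us single-step preservation of acceptance (and of multiple accepting paths), so iterating it should immediately yield preservation across the entire construction; the only subtlety is passing from the sequence of oracles $\{\B_{k'}\}_{k' \ge k}$ to the limit $\B$.

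First, by straightforward induction on $k' \ge k$, Claim \ref{claim:keep-accept} yields that $N_i^{\B_{k'}}(F_j^{\B_{k'}}(0^n))$ accepts (resp.\ accepts on more than one path) for every $k' \ge k$. This handles the finite stages but not $\B$ itself, since in principle adding further words at later stages could still influence the computation. To rule this out, observe that the computation $N_i^{\B}(F_j^{\B}(0^n))$ runs in time at most $p_i(p_j(n))$, hence queries only words of length $\le p_i(p_j(n))$. By Observation \ref{obs:monotoneconstruction}, the step $\ref{prod:constr:name}(\enc(k'),\B_{k'})$ adds only words of length $|\enc(k')|$, and $|\enc(k')| \to \infty$, so there exists some $K \ge k$ such that every word added to $\B$ at stages $\ge K$ has length strictly greater than $p_i(p_j(n))$.

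Consequently $\B$ and $\B_K$ agree on all strings of length $\le p_i(p_j(n))$, which means the computations $N_i^{\B_K}(F_j^{\B_K}(0^n))$ and $N_i^{\B}(F_j^{\B}(0^n))$ are identical as computation trees (the transducer $F_j$ and the machine $N_i$ see the exact same oracle answers on every query). Since the inductive step has already established that $N_i^{\B_K}(F_j^{\B_K}(0^n))$ accepts (resp.\ accepts on more than one path), the same holds for $N_i^{\B}(F_j^{\B}(0^n))$, which is the desired conclusion.

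I do not expect a real obstacle here; the only thing to be careful about is that both halves of the statement (acceptance, and acceptance on at least two paths) are phrased identically, so the two cases are handled by the same argument as long as Claim \ref{claim:keep-accept} is applied in the matching form at each step.
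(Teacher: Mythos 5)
Your argument is correct and is exactly the intended one: the paper states the corollary without a separate proof, treating it as immediate from iterating Claim~\ref{claim:keep-accept} and appealing (via Observation~\ref{obs:monotoneconstruction}) to the fact that the computation queries only words of length at most $p_i(p_j(n))$, so the oracle stabilizes on the relevant slice. Nothing to add.
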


\begin{lemma}\label{lem:prob2}
The oracle $\B$ satisfies property \emph{\ref{satcon_sufficient}}.
\end{lemma}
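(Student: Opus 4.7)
The plan is to proceed by contrapositive: fix $i \in \N$, assume \ref{f_i_kaputt} fails for $i$ (so $N_i^\B$ accepts on at most one path on every input), and derive \ref{gi_nicht_simuliert}. For \ref{gi_nicht_simuliert-i}, let $n \in H_i$. By \autoref{obs:0n-onetime}, only the step $k := \enc^{-1}(0^n)$ can add words of length $n$ to $\B$. Since $n$ is even and every element of $\ran(\langle\cdot\rangle)$ has odd length, the \ref{prod:z:name}-branch of \ref{prod:constr:name} cannot fire; so either the Else branch adds nothing, or $0^n = 0^{m(i',j')}$ for some $(i',j')$. In the latter case \ref{satstufen_form} together with the pairwise disjointness of the $H_\ell$'s forces $i'=i$, so \ref{prod:up:name} is called with indices $(i,j)$. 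Its two-word branch would, by \autoref{claim:n-two-accept-paths-possible} and \autoref{cor:keep-accept}, produce two accepting paths for $N_i^\B(F_j^\B(0^n))$, contradicting the assumption. Thus at most one word of length $n$ enters $\B$, establishing \ref{gi_nicht_simuliert-i}.

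For \ref{gi_nicht_simuliert-ii}, fix $j \in \N$, set $n := m(i,j)$, and use $x := 0^n$ as separating witness. The above analysis shows that \ref{prod:up:name} at step $k = \enc^{-1}(0^n)$ returns either $\emptyset$ or $\{y\}$ for some $y \in \Sigma^n$. If it returns $\emptyset$, then $N_i^{\B_k}(F_j^{\B_k}(0^n))$ accepts, no word of length $n$ is ever added, $\B^{=n} = \emptyset$ and hence $0^n \notin W_i^\B$; \autoref{cor:keep-accept} lifts the acceptance to $\B$, giving the desired equivalence. If it returns $\{y\}$, then $0^n \in W_i^\B$ and $N_i^{\B_{k+1}}(F_j^{\B_{k+1}}(0^n))$ rejects, and the remaining task is to propagate this rejection to $\B$.

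The propagation proceeds by induction on $k' \geq k+1$: $N_i^{\B_{k'}}(F_j^{\B_{k'}}(0^n))$ rejects. By \autoref{obs:monotoneconstruction} step $k'$ adds only words of length $|\enc(k')|$. If $|\enc(k')| > p_i(p_j(n))$, those words lie beyond the query horizon of $N_i(F_j(0^n))$ and rejection trivially persists. Otherwise $n < |\enc(k')| \leq p_i(p_j(n))$; by \ref{m_monoton} no $\hUP$-stage larger than $n$ can lie in this range, so words can be added only through \ref{prod:z:name} invoked with aftercare indices $(i,j)$. By the induction hypothesis its first branch is skipped, and its two-word branch would, via \autoref{cor:keep-accept}, produce two accepting paths for $N_i^\B(F_j^\B(0^n))$, again contradicting the assumption; so the single-word, rejection-preserving branch is taken. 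Once $k'$ is large enough that $|\enc(k')| > p_i(p_j(n))$, \autoref{obs:monotoneconstruction} forces $\B$ and $\B_{k'}$ to agree on every word of length $\leq p_i(p_j(n))$, whence $N_i^\B(F_j^\B(0^n))$ behaves identically to $N_i^{\B_{k'}}(F_j^{\B_{k'}}(0^n))$ and also rejects.

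The principal obstacle is precisely this Case B propagation: the sole threats to rejection are ``short'' \ref{prod:z:name}-additions at lengths strictly between $n$ and $p_i(p_j(n))$, and one must combine \ref{m_monoton} (to confine such additions to aftercare calls for the specific pair $(i,j)$) with the non-\ref{f_i_kaputt} hypothesis (to exclude the two-path branch of \ref{prod:z:name}) in order to keep the rejection intact along the whole inductive chain. Everything else is routine bookkeeping around the construction.
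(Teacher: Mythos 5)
Your proposal is correct and follows essentially the same approach as the paper's proof. The only packaging difference is in how you propagate rejection in the third case of \ref{prod:up:name}: you run an explicit induction over $k' \geq k+1$ showing $N_i^{\B_{k'}}(F_j^{\B_{k'}}(0^n))$ rejects at every stage, whereas the paper picks the \emph{smallest} $k'$ at which the computation would first accept and derives a contradiction there. These are logically interchangeable formulations of the same argument, and you correctly identify the two load-bearing ingredients in both: \ref{m_monoton} to confine all potentially dangerous additions in the window $(n, p_i(p_j(n))]$ to \ref{prod:z:name}-calls carrying the specific aftercare indices $(i,j)$, and \autoref{cor:keep-accept} to turn the two-path branch into a violation of $\neg$\ref{f_i_kaputt}, ruling it out.
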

\begin{proof}
Let $i \in \N$ be arbitrary.
Assume that for $i$ the property \ref{f_i_kaputt} does not hold.
Then we show that property \ref{gi_nicht_simuliert} holds. 
\begin{claim}\label{claim:p1}
The property \ref{gi_nicht_simuliert-i} holds.
\end{claim}
\begin{proof}
Let $n \in H_i$ be arbitrary and $k \coloneqq \enc^{-1}(0^n)$.
By Observation \ref{obs:0n-onetime}, only the step \ref{prod:constr:name}($0^n,\B_k$) adds words of length $n$ to $\B$.
Since $n$ has even and $\langle \cdot, \cdot \rangle$ has always odd length (see \autoref{obs:even-odd}), the condition in line \ref{line:constr:z:condition} of \ref{prod:constr:name} evaluates to `{false}'.
So, only if line \ref{line:up:twopaths:return} in \ref{prod:up:name} is executed, more than one word of length $n$ is added to $\B_k$ and consequently to $\B$.
In this case, since the sets $H_0, H_1, \dots$ are disjoint, there is some $j \in \N$ with $m(i,j)=n$ where $N_i^{\B_{k+1}}(F_j^{\B_{k+1}}(0^n))$ accepts on more than one path.
Invoking Corollary \ref{cor:keep-accept} with $i$, $j$ and $k+1$, we get that also $N_i^{\B}(F_j^{\B}(0^n))$ accepts on more than one path.
This is a contradiction to the assumption that the property \ref{f_i_kaputt} does not hold for $i$.
Hence, line \ref{line:up:twopaths:return} in \ref{prod:up:name} cannot be executed during \ref{prod:constr:name}($0^n,\B_k$).
This gives $\card{\B^{=n}} \leq 1$, as required.
\end{proof}
\begin{claim}\label{claim:p2}
The property \ref{gi_nicht_simuliert-ii} holds.
\end{claim}
\begin{proof}
Let $j \in \N$ be arbitrary, $n \coloneqq m(i,j)$, and $k \coloneqq \enc^{-1}(0^n)$.
Consider the step \ref{prod:constr:name}($0^n,\B_k$).
Since $n$ has even length, the condition in line \ref{line:constr:z:condition} of \ref{prod:constr:name} evaluates to `{false}'. But the condition in line \ref{line:constr:up:condition} is satisfied, so \ref{prod:up:name} will define which words are added to the oracle.
Also recall that only at this step words of length $n$ are added to $\B$ (Observation \ref{obs:0n-onetime}), i.e., $\B_{k+1}^{=n} = \B^{=n}$.
We distinguish between the three cases on how $\B_{k+1}$ can be defined in \ref{prod:up:name}.
\medskip

If $\B_{k+1}$ is defined via line \ref{line:up:accepts}, then $\B^{=n} = \emptyset$ and $N_i^{\B_{k+1}}(F_j^{\B_{k+1}}(0^n))$ accepts.
By Corollary \ref{cor:keep-accept}, $N_i^{\B}(F_j^{\B}(0^n))$ accepts too.
Hence, $N_i^\B(F_j^\B(0^n))$ accepts and $0^n \notin W_i^\B$, i.e., $0^n$ satisfies property \ref{gi_nicht_simuliert-ii}.
\medskip

If $\B_{k+1}$ is defined via line \ref{line:up:twopaths:return}, then $N_i^{\B_{k+1}}(F_j^{\B_{k+1}}(0^n))$ accepts on more than one path.
By Corollary \ref{cor:keep-accept}, $N_i^{\B}(F_j^{\B}(0^n))$ also accepts on more than one path.
But then, property \ref{f_i_kaputt} holds for $i$, a contradiction to the assumption at the start of Lemma \ref{lem:prob2}.
Hence, this case cannot occur.
\medskip

If $\B_{k+1}$ is defined by line \ref{line:up:keepsrejecting}, then $\card{\B^{=n}} = 1$ and $N_i^{\B_{k+1}}(F_j^{\B_{k+1}}(0^n))$ rejects.
Let $k' > k+1$ be the smallest number such that $N_i^{\B_{k'}}(F_j^{\B_{k'}}(0^n))$ accepts.
Either $k'$ does not exist and hence, $N_i^\B(F_j^\B(0^n))$ rejects, satisfying property \ref{gi_nicht_simuliert-ii}. 

Otherwise $\enc(k') \leqlex 1^{p_i(p_j(n))}$, because $N_i(F_j(0^n))$ can query only words of length $\leq p_i(p_j(n))$, 
and for bigger $k'$, only words of length $>p_i(p_j(n))$ are added (Observation \ref{obs:monotoneconstruction}).
Consider the step \ref{prod:constr:name}$(\enc(k'-1),\B_{k'-1})$.
By \ref{m_monoton}, $m(i,j)$ is the biggest $\hUP$-stage $\leq |\enc(k'-1)|$.
Since 
\[0^n = \enc(k) \lelex \enc(k+1) \leqlex \enc(k'-1),\] 
the condition in line \ref{line:constr:up:condition} is not met, thus \ref{prod:up:name} is skipped.
Since by the minimality of $k'$ the computation $N_i^{\B_{k'-1}}(F_j^{\B_{k'-1}}(0^n))$ rejects, $\B_{k'}$ can only be defined via line \ref{line:z:twopaths:return} in \ref{prod:z:name}.
But then, $N_i^{\B_{k'}}(F_j^{\B_{k'}}(0^n))$ accepts on more than one path.
Invoking Corollary \ref{cor:keep-accept} for $i$, $j$, $k'$, we get that also $N_i^{\B}(F_j^{\B}(0^n))$ accepts on more than one path.
Consequently, property \ref{f_i_kaputt} holds for $i$, a contradiction to the assumption at the start of Lemma \ref{lem:prob2}.
Hence, this case cannot occur.
\end{proof}
Since $i$ is arbitrary, the Claims \ref{claim:p1} and \ref{claim:p2} show that if property \ref{f_i_kaputt} does not hold, property \ref{gi_nicht_simuliert} does.
Thus, property \ref{satcon_sufficient} holds.
\end{proof}

\section{Conclusion}
The oracle from the previous section gives the following result.
\begin{theorem}\label{thm:results}
There exists an oracle relative to which $\UP$ has no $\leqmp$-complete sets and $\NP = \PSPACE$.
\end{theorem}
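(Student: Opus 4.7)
The plan is to take the oracle $\B$ constructed in Definition \ref{def:b} and verify that it satisfies both conclusions of the theorem by assembling the ingredients already built up in the previous sections. The entire technical weight has been deferred into the construction and into Lemmas \ref{lem:prob1} and \ref{lem:prob2}, so the proof of the theorem itself amounts to a short composition argument.

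First I would invoke Lemma \ref{lem:prob1}, which establishes that $\B$ satisfies property \ref{np_conp_sufficient}, namely $\QBF^\B = Z^\B$. Combined with Corollary \ref{cor:z_in_np}, which states that $Z^\B \in \NP^\B$, this gives $\QBF^\B \in \NP^\B$. Since $\QBF^\B$ is $\leqmp$-complete for $\PSPACE^\B$ relative to any oracle, the standard upward-closure argument yields $\PSPACE^\B \subseteq \NP^\B$, and the reverse inclusion is immediate, so $\NP^\B = \PSPACE^\B$. This is exactly the content of Lemma \ref{lemma:properties}\ref{lemma:properties-i}, which I would simply cite.

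Next I would apply Lemma \ref{lem:prob2} to conclude that $\B$ satisfies property \ref{satcon_sufficient}. Then Lemma \ref{lemma:properties}\ref{lemma:properties-ii} immediately hands us the conclusion that $\UP^\B$ has no $\leqmp$-complete set: the proof of that lemma is a clean diagonalization, observing that any supposed complete set would correspond to some $N_i^\B$ which must be a genuine $\UP^\B$-machine (ruling out \ref{f_i_kaputt}), forcing \ref{gi_nicht_simuliert} and hence a witness language $W_i^\B \in \UP^\B$ via Corollary \ref{cor:hg_eigenschaften} that defeats every candidate reduction $F_j^\B$.

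Since the construction is well-defined (by Claims \ref{claim:n-two-accept-paths-possible}, \ref{claim:c-two-accept-paths-possible}, and \ref{claim:fix-accept-path}) and both properties hold relative to $\B$, the two conclusions $\NP^\B = \PSPACE^\B$ and ``$\UP^\B$ has no $\leqmp$-complete set'' follow simultaneously, proving the theorem. There is no real obstacle remaining at this stage: the hard work, namely balancing the simultaneous encoding of $\QBF$ into the oracle (which must not interfere with the diagonalization that preserves previously fixed $\hUP$-stages) and the careful accounting that each $\hUP$-stage either kills $N_i$ as a $\UP$-machine or produces a witness against $F_j$, has been absorbed into \autoref{prod:z} and the claims verifying it. The theorem statement is therefore a one-line corollary of \autoref{lemma:properties} applied to $\B$.
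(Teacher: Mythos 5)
Your proposal is correct and takes exactly the same route as the paper: the paper's proof of Theorem~\ref{thm:results} is the one-line assembly ``follows from Lemmas~\ref{lemma:properties}, \ref{lem:prob1}, and~\ref{lem:prob2},'' and you have simply unpacked that composition. Nothing is missing and nothing diverges from the paper's argument.
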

\begin{proof}
This follows from the Lemmas \ref{lemma:properties}, \ref{lem:prob1}, and \ref{lem:prob2}. 
\end{proof}

We summarize the most important properties of the oracle from \autoref{thm:results} relating to the hypotheses of Pudlák \cite{pud17} (\ref{cor:props-01}--\ref{cor:props-09}, \ref{cor:props-15}, and \ref{cor:props-23}), general structural properties (\ref{cor:props-10}--\ref{cor:props-14}) and function classes connected to $\TFNP$ (\ref{cor:props-16}--\ref{cor:props-22}).
\begin{corollary}\label{cor:results}
    Relative to the oracle from \autoref{thm:results}, all of the following hold:
    \begin{enumerate}
        \item\label{cor:props-01} $\NP = \coNP$, i.e., $\neg \hNPneqcoNP$.
        \item\label{cor:props-02} $\UP$ does not contain $\leqmp$-complete sets, i.e., $\hUP$.
        \item\label{cor:props-03} $\TAUT$ does not have p-optimal proof systems, i.e., $\hCON$.
        \item\label{cor:props-04} $\SAT$ does not have p-optimal proof systems, i.e., $\hSAT$.
        \item\label{cor:props-05} $\TFNP$ has a complete problem, i.e., $\neg \hTFNP$.
        \item\label{cor:props-06} $\TAUT$ has optimal proof systems, i.e., $\neg \hCONN$.
        \item\label{cor:props-07} $\DisjNP$ has $\leqmpp$-complete pairs, i.e., $\neg \hDisjNP$.
        \item\label{cor:props-08} $\DisjCoNP$ has $\leqmpp$-complete pairs, i.e., $\neg \hDisjCoNP$.
        \item\label{cor:props-09} $\NP \cap \coNP$ has $\leqmp$-complete sets, i.e., $\neg \NPcoNP$.
        \item\label{cor:props-10} $\P \subsetneq \UP \subsetneq \NP \cap \coNP = \NP$.
        \item\label{cor:props-11} $\NE$ and $\NEE$ are closed under complement (classes defined in \emph{\cite{kmt03}}).
        \item\label{cor:props-12} $\NEE \cap \TALLY \not \subseteq \EE$ (classes defined in \emph{\cite{kmt03}}).
        \item\label{cor:props-13} $\NP$ and $\coNP$ have the shrinking property \emph{\cite[Def.~1.1]{grs11}}.
        \item\label{cor:props-14} $\NP$ and $\coNP$ do not have the separation property \emph{\cite[Def.~1.1]{grs11}}.
        \item\label{cor:props-15} $\DisjNP$ and $\DisjCoNP$ contain $\mathrm{P}$-inseparable pairs (defined in \emph{\cite{ffnr03}}).
        \item\label{cor:props-23} The \emph{ESY} conjecture does not hold \emph{\cite{esy84}}, i.e., there is no disjoint $\NP$-pair $(A,B)$ that is $\leq_\mathrm{T}^{\mathrm{pp}}$-hard for $\NP$. 
        \item\label{cor:props-16} $\NPSV_t \not \subseteq \PF$ \emph{\cite[Def.~1]{ffnr03}}.
        \item\label{cor:props-17} $\NPbV_t \not \subseteq_c \PF$ \emph{\cite[Def.~1]{ffnr03}}.
        \item\label{cor:props-18} $\NPkV_t \not \subseteq_c \PF$ for all $k \geq 2$ \emph{\cite[Def.~1]{ffnr03}}.
        \item\label{cor:props-19} $\NPMV_t \not \subseteq_c \PF$ \emph{\cite[Def.~1]{ffnr03}}.
        \item\label{cor:props-20} $\TFNP \not \subseteq_c \PF$.
        \item\label{cor:props-21} The conjecture $Q$ does not hold \emph{\cite[Def.~2]{ffnr03}}.
        \item\label{cor:props-22} The conjecture $Q'$ does not hold \emph{\cite[Def.~3]{ffnr03}}.
    \end{enumerate}
\end{corollary}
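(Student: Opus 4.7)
The plan is to derive all listed properties from the two core facts of Theorem~\ref{thm:results}: $\NP^\B = \PSPACE^\B$ (which yields $\NP^\B = \coNP^\B$, since $\PSPACE$ is closed under complement) and $\UP^\B$ has no $\leqmp$-complete sets. The work is essentially organizational, as each item is a known relativizable consequence of one or both of these facts.

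First, I would dispatch the hypotheses from Pudlák's list. Item~\ref{cor:props-01} is immediate and item~\ref{cor:props-02} restates Theorem~\ref{thm:results}. Items~\ref{cor:props-03} and~\ref{cor:props-04} follow by contraposing the Köbler--Messner--Torán result~\cite{kmt03}: p-optimality for $\TAUT$ (resp.\ $\SAT$) would force $\UP$ to have $\leqmp$-complete sets. Item~\ref{cor:props-06} holds because $\NP = \coNP$ gives $\TAUT$ polynomially-bounded proofs, hence an optimal proof system. Items~\ref{cor:props-05}, \ref{cor:props-07}, \ref{cor:props-08}, and~\ref{cor:props-09} all flow from $\NP = \coNP$: the collapse turns $\NP \cap \coNP$ into $\NP$ with $\SAT$ as a complete set, from which $\leqmpp$-complete pairs for $\DisjNP$ and $\DisjCoNP$ and a $\TFNP$-complete problem are built by standard $\SAT$-based encodings.

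Next come the structural items. Item~\ref{cor:props-10} combines $\P \subsetneq \UP$ (since $\P$ has $\leqmp$-complete sets while $\UP^\B$ does not), $\UP \subsetneq \NP$ (since $\NP$ has $\SAT$), and $\NP \cap \coNP = \NP$. Items~\ref{cor:props-11} and~\ref{cor:props-12} use padding: $\NP = \coNP$ propagates to $\NE = \coNE$ and $\NEE = \coNEE$, while $\P \neq \NP$ (forced by the absence of $\UP$-complete sets) propagates to $\E \neq \NE$ and then, via the Book-style tally translation, to $\NEE \cap \TALLY \not\subseteq \EE$. Items~\ref{cor:props-13} and~\ref{cor:props-14} invoke the characterizations in~\cite{grs11}; shrinking for $\NP$ comes from closure under complement, while the failure of separation is witnessed by the $\P$-inseparable pairs from item~\ref{cor:props-15}.

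Finally, items~\ref{cor:props-15}--\ref{cor:props-22} together with item~\ref{cor:props-23} concern promise pairs, function classes, and the ESY conjecture. Each is known from~\cite{ffnr03, esy84} to be equivalent to a condition of the form $\P \neq \UP$ or absence of $\UP$-complete sets (possibly combined with $\NP = \coNP$), both of which $\B$ satisfies. For ESY in particular, a $\leq_\mathrm{T}^{\mathrm{pp}}$-hard disjoint $\NP$-pair would be known to force $\UP$ to have complete sets, a contradiction. The one conceptually subtle item is~\ref{cor:props-14}: under a naive reading, separation for $\NP$ is trivial when $\NP = \coNP$, so the precise definition from~\cite{grs11} must be used, under which the obstruction comes from $\P$-inseparability rather than from non-closure under complement. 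All other items are straightforward applications of relativizable implications from the cited literature, and the main obstacle is locating and pointing to the correct reference in each case rather than proving anything new.
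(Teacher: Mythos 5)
Your overall strategy matches the paper's: every item is traced back to $\NP = \coNP$ (from $\NP = \PSPACE$, since $\PSPACE$ is closed under complement) and the absence of $\UP$-complete sets, through relativizable implications from \cite{kmt03}, \cite{mp91}, \cite{grs11}, \cite{ffnr03}. Most of your individual derivations agree with the paper's proof. However, there are concrete gaps in a few items.

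The clearest error is in item \ref{cor:props-12}. You propose the chain $\P \neq \NP \Rightarrow \E \neq \NE \Rightarrow \NEE \cap \TALLY \not\subseteq \EE$. The first implication is false, and fails relativizably: by Book's theorem, $\E = \NE$ is equivalent to $\NP \cap \TALLY \subseteq \P$, and there are oracles with $\P \neq \NP$ where every tally $\NP$-set is already in $\P$. Even if one grants $\E \neq \NE$, the second step is off by an exponential level: $\NEE \cap \TALLY \not\subseteq \EE$ is what one obtains from $\NE \cap \TALLY \not\subseteq \E$ pushed up once more, not from $\E \neq \NE$. The paper avoids all of this by invoking Köbler--Messner--Torán \cite[Cor.~7.1]{kmt03} directly: the non-p-optimality of proof systems for $\TAUT$ (item \ref{cor:props-03}) relativizably implies $\NEE \cap \TALLY \not\subseteq \EE$. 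Your proposed route would not yield item \ref{cor:props-12}.

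Two further routes diverge from the paper and need justification. For item \ref{cor:props-14}, the paper derives the failure of the separation property from $\UP \not\subseteq \coNP$ (which comes from item \ref{cor:props-10}) together with $\NP = \coNP$ and \cite[Thm.~3.9]{grs11}; the $\P$-inseparable pairs of item \ref{cor:props-15} witness only the failure of separation by $\P$ sets, which is strictly weaker than the failure of separation by $\NP \cap \coNP$ sets that the separation property concerns, so they do not by themselves establish item \ref{cor:props-14}. For item \ref{cor:props-23}, the paper derives the failure of the ESY conjecture directly from $\NP = \coNP$ (item \ref{cor:props-01}); your argument that a $\leq_\mathrm{T}^{\mathrm{pp}}$-hard disjoint $\NP$-pair would force $\UP$-complete sets is not an implication I can locate, and more importantly your argument concludes that no such pair exists, which is the opposite of what the item asserts (relative to $\B$ there \emph{is} such a pair, precisely because $\NP = \coNP$). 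Finally, for items \ref{cor:props-07} and \ref{cor:props-08} the appeal to ``standard $\SAT$-based encodings'' is hand-wavy; the paper's cleaner route is \ref{cor:props-06} $\Rightarrow$ \ref{cor:props-07} via \cite[Cor.~6.1]{kmt03} and then \ref{cor:props-07} $\wedge$ \ref{cor:props-01} $\Rightarrow$ \ref{cor:props-08}.
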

\begin{proof}
    To \ref{cor:props-01} and \ref{cor:props-02}: Follows from \autoref{thm:results}.
    \\
    To \ref{cor:props-03}: Follows from \ref{cor:props-02} by Köbler, Messner, and Torán \cite[Cor.~4.1]{kmt03}.
    \\
    To \ref{cor:props-04}: Follows from \ref{cor:props-03} and \ref{cor:props-01}.
    \\
    To \ref{cor:props-05}: Follows from \ref{cor:props-01} by Megiddo and Papadimitrou \cite[Thm.~2.1]{mp91}. 
    \\
    To \ref{cor:props-06}: Follows from \ref{cor:props-01} and the fact that $\NP$ has optimal proof systems \cite[Thm.~3.1]{mes00}.
    \\
    To \ref{cor:props-07}: Follows from \ref{cor:props-06} by Köbler, Messner, Torán \cite[Cor.~6.1]{kmt03}.
    \\
    To \ref{cor:props-08}: Follows from \ref{cor:props-07} and \ref{cor:props-01}.
    \\
    To \ref{cor:props-09}: Follows from \ref{cor:props-01} and that $\NP^\B$ has $\leqmp$-complete sets.
    \\
    To \ref{cor:props-10}: Follows from $\P \subseteq \UP \subseteq \NP$, $\P$ has complete sets, \ref{cor:props-02}, $\NP$ has complete sets and \ref{cor:props-01}.
    \\
    To \ref{cor:props-11}: Via padding, \ref{cor:props-01} implies $\NE = \coNE$ and $\NEE = \coNEE$.
    \\
    To \ref{cor:props-12}: Follows from \ref{cor:props-03} by Köbler, Messner, and Torán \cite[Cor.~7.1]{kmt03}.
 	\\
    To \ref{cor:props-13}: Follows from \ref{cor:props-01} by Glaßer, Reitwießner, and Selivanov \cite[Thm.~3.4]{grs11}.
    \\
    To \ref{cor:props-14}: Follows from $\UP \not \subseteq \coNP$ (implied by \ref{cor:props-10}), \ref{cor:props-01}, and Glaßer, Reitwießner, and Selivanov \cite[Thm.~3.9]{grs11}. 
    \\
    To \ref{cor:props-15}: Follows from $\P \neq \NP \cap \coNP$, implied by \ref{cor:props-10}. Otherwise we could $\mathrm{P}$-separate $(L,\overline{L})$ for all $L \in \NP \cap \coNP$, thus, $L \in \P$. 
    \\
    To \ref{cor:props-23}: Follows from \ref{cor:props-01}.
    \\
    To \ref{cor:props-16}: Follows from $\P \neq \NP \cap \coNP$ (implied by \ref{cor:props-10}), which is equivalent to \ref{cor:props-16}, as shown by Fenner et al.~\cite[Prop.~1]{ffnr03}.
    \\
    To \ref{cor:props-17}: Follows from \ref{cor:props-15} and the equivalence to the existence of $\mathrm{P}$-inseparable $\DisjCoNP$-pairs, as shown by Fenner et al.~\cite[Thm.~4]{ffnr03}. 
    \\
    To \ref{cor:props-18}: By \ref{cor:props-17}, $\NPbV_t \not \subseteq_c \PF$.
    Fenner et al.~\cite[Thm.~14]{ffnr03} show that this implies $\NPkV_t \not \subseteq_c \PF$ for all $k \geq 2$.
    \\
    To \ref{cor:props-19}: Follows from $\P \neq \NP \cap \coNP$ (implied by \ref{cor:props-10}) as shown by Fenner et al.~\cite[Thm.~2]{ffnr03}.
    \\
    To \ref{cor:props-20}: Follows from \ref{cor:props-19}, as shown by Fenner et al.~\cite[Prop.~7, Thm.~2]{ffnr03}.
    \\
    To \ref{cor:props-21}: Follows from \ref{cor:props-19} by Fenner et al.~\cite[Thm.~2]{ffnr03}.
    \\
    To \ref{cor:props-22}: Follows from \ref{cor:props-17} by Fenner et al.~\cite[Thm.~4]{ffnr03}.
\end{proof}

\bibliography{9-Literatursammlung}

\end{document}